\begin{document}
\title{Hitting Times of Random Walks on Edge Corona Product Graphs}
\author{Mingzhe~Zhu} 
\affiliation{School of Computer Science, Fudan University, Shanghai 200433, China}
\affiliation{Shanghai Key Laboratory of Intelligent Information Processing,   Fudan University, Shanghai 200433, China}

\author{Wanyue~Xu} 
\affiliation{School of Computer Science, Fudan University, Shanghai 200433, China}
\affiliation{Shanghai Key Laboratory of Intelligent Information Processing,   Fudan University, Shanghai 200433, China}

\author{Wei~Li} \email{fd\_liwei@fudan.edu.cn}

\affiliation{Academy for Engineering and Technology, Fudan University, Shanghai 200433, China}

\author{Zhongzhi~Zhang} \email{fd\_liwei@fudan.edu.cn; zhangzz@fudan.edu.cn}
\affiliation{School of Computer Science, Fudan University, Shanghai 200433, China}
\affiliation{Shanghai Key Laboratory of Intelligent Information Processing,   Fudan University, Shanghai 200433, China}
\affiliation{Institute of Intelligent Complex Systems, Fudan University, Shanghai 200433, China}
\affiliation{Shanghai Engineering Research Institute of Blockchain,  Fudan University, Shanghai 200433, China}

\author{Haibin~Kan} 
\affiliation{School of Computer Science, Fudan University, Shanghai 200433, China}
\affiliation{Shanghai Key Laboratory of Intelligent Information Processing,   Fudan University, Shanghai 200433, China}
\affiliation{Shanghai Engineering Research Institute of Blockchain,  Fudan University, Shanghai 200433, China}
\affiliation{Yiwu Research Institute of Fudan University, Yiwu City 322000, China}

\shortauthors{M. Zhu, W. Xu, W. Li, Z. Zhang and H. Kan}
\keywords{Random walk, hitting time, normalized Laplacian spectrum,  graph product}

\begin{abstract}
  Graph products have been extensively applied to model complex networks with striking properties observed in real-world complex systems. In this paper, we study the hitting times for random walks on a class of graphs generated iteratively by edge corona product.  We first derive recursive solutions to the eigenvalues and eigenvectors of the normalized adjacency matrix associated with the graphs. Based on these results, we further obtain interesting quantities about hitting times of random walks, providing iterative formulas for two-node hitting time, as well as closed-form expressions for the Kemeny's constant defined as a weighted average of hitting times over all node pairs, as well as the  arithmetic mean of hitting times of all pairs of nodes.
\end{abstract}

\maketitle

\section{Introduction}
Graph operations and products play an important role in network science, which have been used to model complex networks with the prominent scale-free~\cite{BaAl99} and small-world~\cite{WaSt98} properties as observed in various real-life networks~\cite{Ne03}. Since diverse realistic large-scale networks consist of smaller pieces or patterns, such as communities~\cite{GiNe02}, motifs~\cite{MiShItKaKhAl02}, and cliques~\cite{Ts15}, graph operations and products are a natural way to generate a massive graph out of smaller ones. Furthermore, there are many advantages to using graph operations and products to create complex networks. For example, it allows  analytical treatment for structural and dynamical aspects of the resulting networks. Thus far, a variety of graph operations and products have been introduced or proposed to construct models of complex networks, including triangulation~\cite{DoGoMe02,ZhRoZh07,XiZhCo16b,ShZh19,YiZhPa20}, Kronecker product~\cite{We62,LeFa07,LeChKlFaGh10}, hierarchical product~\cite{BaCoDaFi09,BaDaFiMi09,BaCoDaFi16,QiYiZh19,QiZhYiLi19}, as well as corona product~\cite{LvYiZh15,ShAdMi17,QiLiZh18}.



Recently, a class of  iteratively growing network model was introduced, leveraging  an edge operation on graphs~\cite{WaYiXuZh19}. This family of graphs  exhibit the striking scale-free small-world properties as observed in diverse real systems. The degree distribution  $P(d)$ of the graphs follows a power-law form $P(d)\sim d^{-\gamma_q}$ with the exponent  $\gamma_q$ lying in the interval $(2,3)$. Their diameter scales logarithmically with the number of nodes. Moreover, their clustering coefficient is high. However, except some structural and combinatorial properties, the dynamical aspects on these networks are not well understood, for example, hitting times of random walks on this network family. 
 

In this paper, we present an in-depth study on hitting time--- a most relevant quantity about random walks on the iteratively growing networks~\cite{WaYiXuZh19}. We first give iterative formulas for eigenvalues and eigenvectors of normalized adjacency (or Laplacian) matrix for the networks, based on which we determine two-node hitting time and the  Kemeny's constant for random walks. Also, we derive closed-form expressions for the sum of hitting times, additive-degree sum of hitting times, multiplicative-degree sum of hitting times over all pairs of nodes, as well as the arithmetic mean of hitting times of all node pairs.

\section{Preliminaries}

In this section, we introduce some basic concepts for graphs and random walks on graphs.

\subsection{Graph and Matrix Notation}

Let $G(V,E)$ denote a simple connected graph with $n$ nodes/vertices and $m$ edges. Let $V(G)=\{1,2,\ldots,n\}$ be the  set of  $n$ nodes, and let $E(G)=\{e_1,e_2,\ldots, e_m\}$ be  set of $m$ edges.

Let $A$ denote the adjacency matrix of $G$, the $(i,j)$th entry $A(i,j)$ of which is 1 (or 0) if nodes $i$ and $j$ are (not) adjacent in $G$. Let $\Psi (i)$ denote the set of neighbors for  node $i$  in graph $G$. Then the degree of node $i$ is  $d_i=\sum_{j  \in \Phi(i)}A(i,j)$, which forms the $i$th diagonal entry of the diagonal degree matrix $D$ for $G$. The incidence matrix $B$ of  $G$ is an $n\times m$, where the $(i,j)$th entry  $B(i,j)=1$ (or 0) if node $i$ is (not) incident with $e_j$.
\begin{lemma}\emph{\cite{CvDoSa80}} \label{bRank}
	Let $G$ be a simple connected unbipartite graph with $n$ nodes. Then the rank of its incidence matrix $B$ is ${\rm rank} (B)=n$.
\end{lemma}
\begin{lemma}\emph{\cite{CvDoSa80}} \label{BBAD}
	Let $G$ be a simple connected graph. Then its incidence matrix $B$, adjacency matrix $A$ and diagonal degree matrix $D$ satisfy
	\begin{equation*}
		BB^\top = A + D.
	\end{equation*}
\end{lemma}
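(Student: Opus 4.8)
The plan is to verify the matrix identity $BB^\top = A + D$ entrywise, computing the $(i,k)$ entry of the left-hand side directly from the definition of the incidence matrix and comparing it with the corresponding entry of $A + D$. Writing out the matrix product, the $(i,k)$ entry of $BB^\top$ is $\sum_{j=1}^{m} B(i,j)\,B(k,j)$, a sum over all $m$ edges. Since each column of $B$ records exactly which two endpoints its edge joins, the summand $B(i,j)B(k,j)$ equals $1$ precisely when edge $e_j$ is incident with both node $i$ and node $k$, and $0$ otherwise. The whole argument then reduces to interpreting this product combinatorially in two cases.

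First I would treat the diagonal case $i = k$. Here the summand becomes $B(i,j)^2 = B(i,j)$ because the entries are $0$ or $1$, so the sum counts the number of edges incident with node $i$, which is exactly the degree $d_i$. This matches the $(i,i)$ entry of $A + D$, since $A(i,i) = 0$ for a simple graph and $D(i,i) = d_i$.

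Next I would treat the off-diagonal case $i \neq k$. The product $B(i,j)B(k,j)$ equals $1$ only when $e_j$ has both $i$ and $k$ as endpoints, i.e.\ $e_j = \{i,k\}$. The key point --- and essentially the only place a hypothesis is used --- is that $G$ is simple, so there is at most one edge joining $i$ and $k$; hence the sum is $1$ when $i$ and $k$ are adjacent and $0$ otherwise, which is precisely $A(i,k)$, while $D(i,k) = 0$ off the diagonal. Combining the two cases yields $(BB^\top)(i,k) = A(i,k) + D(i,k)$ for every pair $(i,k)$, which is the claimed identity. There is no serious obstacle here; the only subtlety is the appeal to simplicity in the off-diagonal case, since multiple edges between the same pair would otherwise replace the $0/1$ entry of $A$ by an edge multiplicity.
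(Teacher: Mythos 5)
Your proof is correct and complete: the entrywise computation of $BB^\top$, split into the diagonal case (counting incident edges to get $d_i$) and the off-diagonal case (using simplicity so the count is the $0/1$ adjacency entry), is exactly the standard argument for this identity. The paper itself states this lemma as a cited result from the literature without giving a proof, so there is no alternative in-paper argument to compare against; your verification fills that gap in the natural way.
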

\subsection{Random Walks on Graphs}

For a graph $G$, one can define a discrete-time unbiased random walk running on it. At every time step, the walker jumps from its current location, node $i$,  to an adjacent node $j$ with probability $A(i,j)/d_i$. Such a random walk on $G$ is  a Markov chain~\cite{KeSe60} characterized by the transition probability matrix $T=D^{-1}A$, with its  $(i,j)$th entry  $T(i,j)$ being $A(i,j)/d_i$. For an unbiased random walk on unbipartite graph $G$ with $n$ nodes and $m$ edges, its stationary distribution is an $n$-dimension vector $\pi=(\pi_1, \pi_2, \ldots, \pi_n)^\top=(d_1/2m, d_2/2m, \ldots, d_n/2m)^\top$.

In general, the transition  probability matrix $T$ of graph $G$ is asymmetric. However, $T$ is similar to a symmetric matrix   $P$ defined as
\begin{equation*}
	P=D^{-\frac{1}{2}} A D^{-\frac{1}{2}}=D ^{\frac{1}{2}}T D^{-\frac{1}{2}}\,,
\end{equation*}
which is often called  the normalized adjacency matrix  of $G$. By definition,   the $(i,j)$th entry of $P$ is $P(i,j)=\frac{A(i,j)}{\sqrt{d_id_j}}$. Then, it is obvious that $P(i,j)=P(j,i)$. Let $I$ be the identity matrix of approximate dimensions. Then,  $I-P$ is the normalized Laplacian matrix~\cite{Ch97} of graph $G$.


Let  $\lambda_1$, $\lambda_2$, $\ldots$, $\lambda_n$ be the $n$ eigenvalues of matrix  $P$. Then, these $n$ eigenvalues can be listed in decreasing order as $1=\lambda_1>\lambda_2\geq \ldots \geq\lambda_n\geq-1$, with $\lambda_n=-1$ if and only if $G$ is a bipartite graph. Let $v_1$, $v_2$, $\ldots$, $v_n$ be the orthonormal eigenvectors corresponding to  the $n$ eigenvalues $\lambda_1$, $\lambda_2$,$\ldots$, $\lambda_n$, where $v_i=(v_{i1},v_{i2},\ldots,v_{in})^\top$, $i=1,2,\ldots, n$. Then,
\begin{equation}\label{eig=1}
	v_1=\Big(\sqrt{d_1/2m},\sqrt{d_2/2m},...,\sqrt{d_n/2m}\Big)^\top
\end{equation}
and
\begin{equation}
	\sum_{k=1}^n v_{ik}v_{jk}=\sum_{k=1}^n v_{ki}v_{kj}=\left\{
	\begin{array}{ll}
		1, & \hbox{if $i=j$,} \\
		0, & \hbox{otherwise.}
	\end{array}
	\right.
\end{equation}

A key quantity associated with random walks is  hitting time. The hitting time $T_{ij}$ from one node $i$ to another node $j$ is defined as the expected number of jumps needed for a walker starting from node $i$ to reach node $j$ for the first time. The hitting time  $T_{ij}$  is encoded in the eigenvalues and eigenvectors of the normalized adjacency (or Laplacian) matrix $P$ for graph $G$.
\begin{lemma}\label{HitTime}\emph{\cite{Lo93}}
	For random walks on a simple connected graph $G$ with $n$ nodes and $m$ edges,  the  hitting time $T_{ij}$ from one node $i$ to another node $j$ can be expressed in terms of the  eigenvalues and their orthonormal eigenvectors for the normalized adjacency matrix $P$ as
	\begin{equation*}
		T_{ij}=2m\sum_{k=2}^{n} \frac{1}{1-\lambda_k}
		\left(\frac{v_{kj}^2}{d_j}-\frac{v_{ki}v_{kj}}{\sqrt{d_i d_j}}\right).  
	\end{equation*}
\end{lemma}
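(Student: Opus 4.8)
The plan is to derive the stated spectral formula from the one-step (first-transition) recurrence for hitting times, reducing it to a linear system governed by the normalized Laplacian $L=I-P$ and then inverting that system via the spectral decomposition of $L$.

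First I would write down the first-step equations. Fixing the target $j$ and collecting the hitting times into a column vector $h$ with entries $h_i=T_{ij}$, conditioning on the first jump gives $T_{ij}=1+\sum_\ell T(i,\ell)T_{\ell j}$ for $i\neq j$, together with $T_{jj}=0$. Using that the expected return time to $j$ equals $1/\pi_j=2m/d_j$, the $j$-th equation can be folded in as well, so that the whole system reads $(I-T)h=\mathbf 1-\frac{2m}{d_j}e_j$, where $\mathbf 1$ is the all-ones vector and $e_j$ the $j$-th standard basis vector.

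Next I would symmetrize. Since $T=D^{-\frac12}PD^{\frac12}$, we have $I-T=D^{-\frac12}LD^{\frac12}$ with $L=I-P$. Setting $w=D^{\frac12}h$ (so that $w_i=\sqrt{d_i}\,T_{ij}$) and multiplying the system by $D^{\frac12}$ turns it into $Lw=\sqrt{2m}\,v_1-\frac{2m}{\sqrt{d_j}}e_j$, where I have used $D^{\frac12}\mathbf 1=\sqrt{2m}\,v_1$ from the explicit form of $v_1$ in~\eqref{eig=1}. Because $L$ is symmetric with eigenpairs $(1-\lambda_k,v_k)$ and a one-dimensional kernel spanned by $v_1$ (as $\lambda_1=1$), I would first verify the Fredholm solvability condition $v_1^\top\big(\sqrt{2m}\,v_1-\frac{2m}{\sqrt{d_j}}e_j\big)=0$; this holds exactly because $v_{1j}=\sqrt{d_j/2m}$. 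The general solution is then $w=L^{+}\big(\sqrt{2m}\,v_1-\frac{2m}{\sqrt{d_j}}e_j\big)+\alpha v_1$, where $L^{+}=\sum_{k=2}^n\frac{1}{1-\lambda_k}v_kv_k^\top$ is the Moore--Penrose pseudoinverse and $\alpha$ is the still-free component along the kernel.

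Finally I would pin down $\alpha$ and read off the answer. Applying $L^{+}$ annihilates the $v_1$ term, leaving $w_i=-\frac{2m}{\sqrt{d_j}}\sum_{k=2}^n\frac{v_{ki}v_{kj}}{1-\lambda_k}+\alpha v_{1i}$; imposing the boundary condition $w_j=\sqrt{d_j}\,T_{jj}=0$ fixes $\alpha$, and substituting $v_{1i}=\sqrt{d_i/2m}$ and $v_{1j}=\sqrt{d_j/2m}$, then dividing by $\sqrt{d_i}$, yields precisely the claimed expression. The main obstacle is the bookkeeping at the two ``singular'' points: correctly incorporating the $j$-th (return-time) equation into the right-hand side, and correctly using the kernel direction $v_1$ both for the solvability check and for fixing the additive constant via $T_{jj}=0$; once these are handled, the remainder is a direct spectral substitution.
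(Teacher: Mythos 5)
The paper offers no proof of this lemma at all --- it is imported verbatim from Lov\'asz's survey \cite{Lo93} --- so your derivation can only be judged on its own terms, and on those terms it is correct. The first-transition equations plus Kac's formula (expected return time $1/\pi_j=2m/d_j$) do give $(I-T)h=\mathbf 1-\tfrac{2m}{d_j}e_j$ including the $j$-th row; the conjugation $I-T=D^{-1/2}LD^{1/2}$ with $w=D^{1/2}h$ is exact; the solvability check works precisely because $v_{1j}=\sqrt{d_j/2m}$ by Eq.~\eqref{eig=1}; and since $\ker L=\operatorname{span}\{v_1\}$ for a connected graph and $v_{1j}\neq 0$, the condition $w_j=0$ determines $\alpha=\tfrac{2m\sqrt{2m}}{d_j}\sum_{k\geq 2}\tfrac{v_{kj}^2}{1-\lambda_k}$, after which dividing $w_i$ by $\sqrt{d_i}$ lands exactly on the stated formula. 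Two points deserve to be made explicit if you write this up. First, uniqueness: you should record that $\ker(I-T)=\operatorname{span}\{\mathbf 1\}$ by irreducibility, so the hitting-time vector is the \emph{only} solution of the linear system whose $j$-th entry vanishes; this is what licenses identifying your spectral solution with $T_{ij}$ rather than with some other solution. Second, Kac's formula need not be taken as an external input: testing $(I-T)h$ against the left null vector $\pi^\top$ shows that once the $n-1$ first-step equations for $i\neq j$ hold, the $j$-th entry of the right-hand side is forced to equal $1-2m/d_j$; incorporating this remark makes the argument fully self-contained, using nothing beyond the facts already stated in the paper's Preliminaries.
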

The hitting time is relevant in various scenarios~\cite{Re01}. For example, it has been used to design clustering algorithm~\cite{ChLiTa08,Ab18},  to measure the transmission costs in wireless networks~\cite{LiZh13IEEE, ElMaPr06}, as well as  to evaluate the centrality of nodes in complex networks~\cite{WhSm03,ZhXuZh20}.

For a graph $G$,  $T_{ij}$  is usually not equal to  $T_{ji}$.  However, the  commute time between a pair of nodes can make up for this shortcoming.  For two nodes $i$ and $j$, their commute time   $C_{ij}$ is  defined as  the sum of $T_{ij}$ and $T_{ji}$, namely, $C_{ij}=T_{ij}+T_{ji}$. Thus, the relation $C_{ij}=C_{ji}$ always holds for any pair of nodes nodes $i$ and $j$.
\begin{lemma}\emph{\cite{ChRaRuSmTi97}} \label{Foster}
	Let $G$ be a simple connected graph with $n$ nodes and $m$ edges. Then the sum of commute times $C_{ij}$ between all the $m$ pairs of adjacent nodes in  $G$ is equivalent to $2m(n-1)$, i.e.
	\begin{equation*}
		\sum_{(i,j)\in E}C_{ij}=2m(n-1).
	\end{equation*}
\end{lemma}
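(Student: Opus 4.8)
The plan is to build directly on the spectral formula for hitting times in Lemma~\ref{HitTime}, since $\sum_{(i,j)\in E} C_{ij}$ is a sum of commute times and each commute time is a symmetrized hitting time. First I would write $C_{ij}=T_{ij}+T_{ji}$ using Lemma~\ref{HitTime}. When the two expressions are added, the cross terms $-v_{ki}v_{kj}/\sqrt{d_id_j}$ appear twice and the diagonal terms combine, so the bracket collapses into a perfect square:
\begin{equation*}
	C_{ij}=2m\sum_{k=2}^{n}\frac{1}{1-\lambda_k}\left(\frac{v_{ki}}{\sqrt{d_i}}-\frac{v_{kj}}{\sqrt{d_j}}\right)^2.
\end{equation*}
This symmetric form is the crucial reduction; it is precisely why commute times, unlike raw hitting times, behave well under summation.

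Next I would interchange the order of summation, summing over the edges $(i,j)\in E$ before the spectral index $k$. For each fixed $k$, set $x^{(k)}=D^{-\frac{1}{2}}v_k$, so that its $i$th coordinate is $v_{ki}/\sqrt{d_i}$. The edge sum $\sum_{(i,j)\in E}\bigl(x_i^{(k)}-x_j^{(k)}\bigr)^2$ is the standard combinatorial Laplacian quadratic form, equal to $(x^{(k)})^\top(D-A)\,x^{(k)}$. Substituting $x^{(k)}=D^{-\frac{1}{2}}v_k$ turns this into $v_k^\top D^{-\frac{1}{2}}(D-A)D^{-\frac{1}{2}}v_k=v_k^\top(I-P)v_k$, where $P=D^{-\frac{1}{2}}AD^{-\frac{1}{2}}$ is the normalized adjacency matrix.

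The final step uses the spectral data directly. Since $v_k$ is an eigenvector of $P$ with eigenvalue $\lambda_k$, we have $(I-P)v_k=(1-\lambda_k)v_k$, and orthonormality gives $v_k^\top v_k=1$, so $v_k^\top(I-P)v_k=1-\lambda_k$. The factor $1-\lambda_k$ then exactly cancels the $1/(1-\lambda_k)$ inherited from the hitting-time formula, leaving
\begin{equation*}
	\sum_{(i,j)\in E}C_{ij}=2m\sum_{k=2}^{n}\frac{1-\lambda_k}{1-\lambda_k}=2m(n-1),
\end{equation*}
where the sum runs over exactly the $n-1$ indices $k=2,\ldots,n$ (connectivity guarantees $\lambda_k<1$ for $k\geq 2$, so each term is well defined).

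I do not expect a genuine obstacle; the argument is essentially forced once the commute time is written as a perfect square. The only point requiring care is the manipulation of the quadratic form: one must verify that $\sum_{(i,j)\in E}(x_i-x_j)^2=x^\top(D-A)x$ (each edge contributes to the diagonal via $D$ and off-diagonally via $-A$) and then track the conjugation by $D^{-\frac{1}{2}}$ so that $D-A$ becomes $I-P$ rather than $D-P$ or another variant. An alternative route bypassing Lemma~\ref{HitTime} would invoke Lemma~\ref{BBAD} together with the effective-resistance interpretation of commute times, but the spectral approach above is self-contained given the results already stated.
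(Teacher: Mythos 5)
Your proof is correct, but there is nothing in the paper to compare it against: the paper does not prove this lemma at all. It is Foster's theorem, imported with a citation to \cite{ChRaRuSmTi97} and then used as a black box in the later arguments (e.g., in Lemmas~\ref{afore1} and~\ref{afore2}). Your spectral argument checks out at every step: symmetrizing Lemma~\ref{HitTime} does collapse the commute time into the perfect square $C_{ij}=2m\sum_{k=2}^{n}\frac{1}{1-\lambda_k}\bigl(v_{ki}/\sqrt{d_i}-v_{kj}/\sqrt{d_j}\bigr)^2$; summing this square over the edges for fixed $k$ gives the Laplacian quadratic form $x^\top(D-A)x$ at $x=D^{-1/2}v_k$, which equals $v_k^\top(I-P)v_k=1-\lambda_k$ by the eigenvalue equation and orthonormality; interchanging the two finite sums is harmless; and connectivity guarantees $1-\lambda_k>0$ for $k\geq 2$, so each of the $n-1$ terms is well defined and contributes exactly $1$, yielding $2m(n-1)$. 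What your route buys is self-containedness within the paper's own toolkit: Foster's theorem falls out of the same spectral machinery (Lemma~\ref{HitTime} plus orthonormality of the $v_k$) that the paper already uses everywhere else, whereas the cited original proof proceeds through effective resistances and electrical-network identities, which the paper never introduces. The only point worth tightening is to state the interchange of summation explicitly, i.e., $\sum_{(i,j)\in E}\sum_{k=2}^{n}=\sum_{k=2}^{n}\sum_{(i,j)\in E}$, before substituting the quadratic form, since that is where the edge structure and the spectral data decouple.
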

The symmetry of commute time makes it  have many applications in different areas, such as link prediction~\cite{FoPiReSa07} and graph embedding~\cite{CuWaPeZh19}.
In addition to commute time, many other interesting quantities of graph $G$ can be defined or derived from hitting times.  For example, the mean hitting time $\bar{H}(G)$ of a graph $G$ with $n$ nodes is the
average of hitting times over all $n(n-1)$ node pairs:
\begin{equation}\label{meanht}
	\bar{H}(G)=\frac{1}{n(n-1)}\sum_{i=1}^{n}\sum_{j=1}^{n}T_{ij}.
\end{equation}
The quantity of mean hitting time has  been utilized as  an indicator of mean cost of search in networks~\cite{GuDiVeCaAr02,FeQuYi14} and  global utility of social recommender networks~\cite{WoLiChMiLiCh16}.

Another quantity defined according to hitting times is the  Kemeny's constant. For a graph $G$, its Kemeny's constant $K(G)$ is defined as  the expected number of steps required for a walker starting from a node $i$ to a destination node  chosen randomly according to a stationary distribution of  random walks on $G$~\cite{Hu14}, that is $K(G)=\sum_{j=1}^{n}\pi_j T_{ij}$.   The Kemeny constant $K(G)$  is independent of the selection of starting node $i$~\cite{LeLo02}, which means $\sum_{j=1}^{n}\pi_j T_{ij}=\sum_{j=1}^{n}\pi_j T_{kj}$ holds for an arbitrary pair of
node $i$ and $k$. Interesting, the Kemeny's constant of graph $G$ is only dependent on the eigenvalues of matrix $P$.
\begin{lemma}\emph{\cite{butler2016algebraic}}\label{lemmaKem}
	Let $G$ be a simple connected graph with $n$ nodes. Then, the Kemeny's constant $K(G)$ of  $G$  can be represented as
	\begin{equation}
		K(G)=\sum_{j=1}^{n}\pi_j T_{ij}= \sum_{k=2}^n\frac{1}{1-\lambda_k}..
	\end{equation}
\end{lemma}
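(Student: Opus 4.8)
The plan is to substitute the spectral formula for the hitting time from Lemma~\ref{HitTime} directly into the defining sum $K(G)=\sum_{j=1}^{n}\pi_j T_{ij}$ and reduce everything via the orthonormality of the eigenvectors. First I would write $\pi_j=d_j/2m$ and plug in, obtaining
\begin{equation*}
	K(G)=\sum_{j=1}^{n}\frac{d_j}{2m}\cdot 2m\sum_{k=2}^{n}\frac{1}{1-\lambda_k}\left(\frac{v_{kj}^2}{d_j}-\frac{v_{ki}v_{kj}}{\sqrt{d_i d_j}}\right).
\end{equation*}
The factor $2m$ cancels against $1/2m$, and I would then interchange the two finite sums so that $1/(1-\lambda_k)$ is pulled to the front, leaving for each fixed $k\geq 2$ an inner sum over $j$ to evaluate.

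For each fixed $k\geq 2$ this inner sum splits into two pieces. The first piece is $\sum_{j=1}^{n} d_j\cdot v_{kj}^2/d_j=\sum_{j=1}^{n} v_{kj}^2$, which equals $1$ because $v_k$ is a unit eigenvector (the $i=j$ case of the orthonormality relation following Eq.~\eqref{eig=1}). The second piece simplifies to $\tfrac{v_{ki}}{\sqrt{d_i}}\sum_{j=1}^{n}\sqrt{d_j}\,v_{kj}$, and the whole computation hinges on showing that $\sum_{j=1}^{n}\sqrt{d_j}\,v_{kj}$ vanishes for every $k\geq 2$.

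The key observation, and really the only non-mechanical step, is to recognize this last sum as an orthogonality relation against the Perron eigenvector $v_1$. By Eq.~\eqref{eig=1} we have $v_{1j}=\sqrt{d_j/2m}$, hence $\sqrt{d_j}=\sqrt{2m}\,v_{1j}$, so that $\sum_{j=1}^{n}\sqrt{d_j}\,v_{kj}=\sqrt{2m}\sum_{j=1}^{n}v_{1j}v_{kj}=\sqrt{2m}\,\delta_{1k}$, which is $0$ for all $k\geq 2$ by orthogonality of $v_1$ and $v_k$. Consequently the second piece drops out and the inner sum equals $1$ for each $k\geq 2$, independently of the starting node $i$, which simultaneously recovers the claimed independence of $i$. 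Summing the surviving terms yields $K(G)=\sum_{k=2}^{n}\frac{1}{1-\lambda_k}$, completing the argument. I do not expect any genuine analytic obstacle here: the entire proof is a bookkeeping manipulation whose only substantive ingredient is spotting that the degree weights $\sqrt{d_j}$ are precisely the components of $v_1$, which turns the cross term into an orthogonality that annihilates it.
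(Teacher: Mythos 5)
Your proof is correct, but note that the paper itself does not prove this lemma at all: it is imported verbatim from the cited reference \cite{butler2016algebraic}, so there is no internal proof to compare against. What your argument buys is a short, self-contained derivation using only ingredients already present in the paper, namely Lemma~\ref{HitTime}, the stationary distribution $\pi_j=d_j/2m$, Eq.~\eqref{eig=1}, and the orthonormality relation. Each step checks out: after cancelling $2m$ and swapping the finite sums, the diagonal piece gives $\sum_j v_{kj}^2=1$, and the cross term
\begin{equation*}
	\frac{v_{ki}}{\sqrt{d_i}}\sum_{j=1}^{n}\sqrt{d_j}\,v_{kj}
	=\frac{v_{ki}}{\sqrt{d_i}}\,\sqrt{2m}\sum_{j=1}^{n}v_{1j}v_{kj}=0
	\quad\text{for } k\geq 2
\end{equation*}
vanishes exactly as you say, since $\sqrt{d_j}=\sqrt{2m}\,v_{1j}$. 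A bonus of your route is that it simultaneously proves the $i$-independence of $\sum_j \pi_j T_{ij}$, which the paper separately attributes to~\cite{LeLo02}; in your computation the starting node $i$ only ever enters through the annihilated cross term. One small point worth making explicit if this were written up: the formula of Lemma~\ref{HitTime} evaluates to $0$ when $j=i$, so including the $j=i$ term in the sum over $j$ is harmless and consistent with $T_{ii}=0$.
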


The Kemeny constant has also found applications in diverse areas~\cite{Hu14}. It has been widely used to characterize the criticality~\cite{DeMeRoSaVa18,LeSa18} or connectivity~\cite{BeHe19} for a graph. Moreover, it can be applied to measure the efficiency of user navigation through the World Wide Web~\cite{LeLo02}.  Finally, it was also exploited to quantify the performance of a class of noisy formation control protocols~\cite{JaOl19}, and to gauge the efficiency of robotic surveillance in network environments~\cite{PaAgBu15}. Very recently, some properties and nearly linear time algorithms for computing the Kemeny's constant have been studied or developed~\cite{ZhXuZh20,XuShZhKaZh20}.

\section{Network Construction, Properties, and Important Matrices}

In this section, we  introduce the construction and properties for the studied networks, and  provide some relations among matrices related to the networks, which are very useful for deriving the properties of eigenvalues and eigenvectors of the normalized adjacency matrix, as well as the  hitting times.

\begin{figure}
	\centering
	\includegraphics[width=1.0\linewidth]{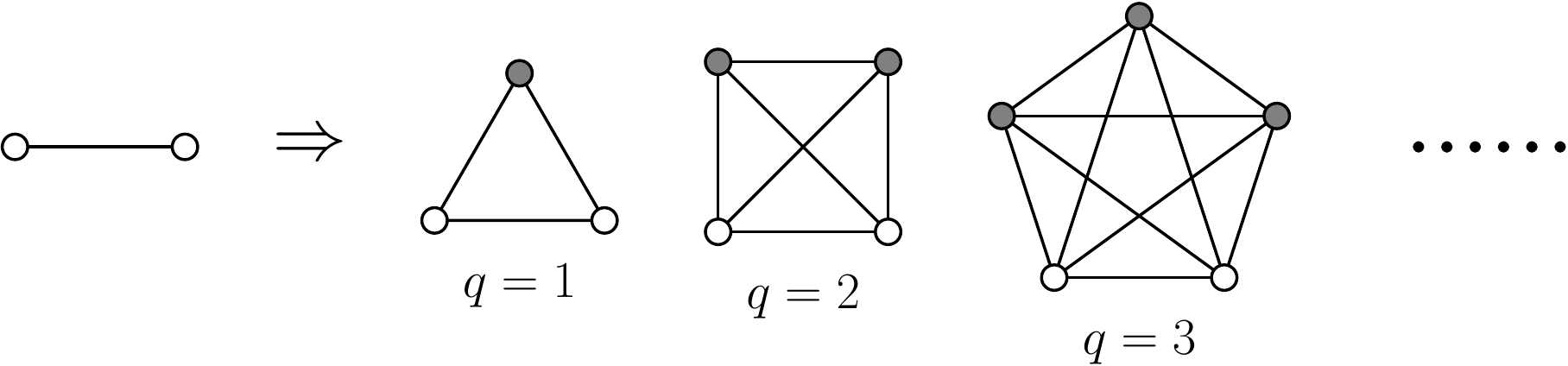}
	\caption{Network construction method. The next-iteration network is obtained from the current network by performing the operation on the right-hand side of the arrow  for each existing edge.}
	\label{build}
\end{figure}

\subsection{Network Construction and Properties}

The network family studied here is proposed in~\cite{WaYiXuZh19} and constructed in an iterative way. It is controlled by two parameters $q$ and $g$ with $q \geq 1$ and $g\geq 0$. Let $\mathcal{K}_q$ $(q \geq1)$ denote the complete graph with $q$ nodes. For $q = 1$,  suppose that $\mathcal{K}_1$ is a graph with an isolate node. Let $\mathcal{G}_q(g)$ be the network after $g$ iterations.  Then, $\mathcal{G}_q(g)$ is constructed as follows. For
$g = 0$, $\mathcal{G}_q(0)$ is the complete graph $\mathcal{K}_{q+2}$. For $g > 0$, $\mathcal{G}_q(g+1)$
is obtained from $\mathcal{G}_q(g)$ by performing the  operation
shown in Fig.~\ref{build}: for every existing edge of $\mathcal{G}_q(g)$, a
complete graph $\mathcal{K}_q$ is introduced, every node of which is  connected  to both
end nodes of the edge. Figures~\ref{netA} and~\ref{netB} illustrate the networks
corresponding to two particular cases of $q = 1$ and $q = 2$.

\begin{figure}
	\centering
	\includegraphics[width=0.45\textwidth]{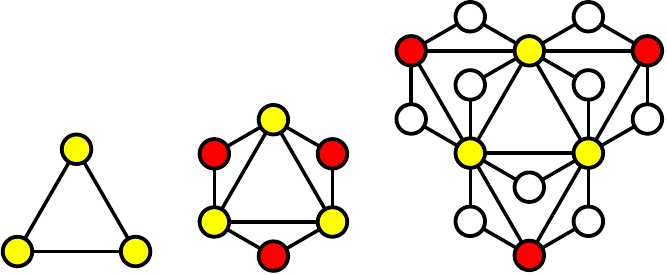}
	\caption{The networks of the first three iterations for $q =1$.} 
	\label{netA}
\end{figure}

\begin{figure}
	\centering
	\includegraphics[width=0.45\textwidth]{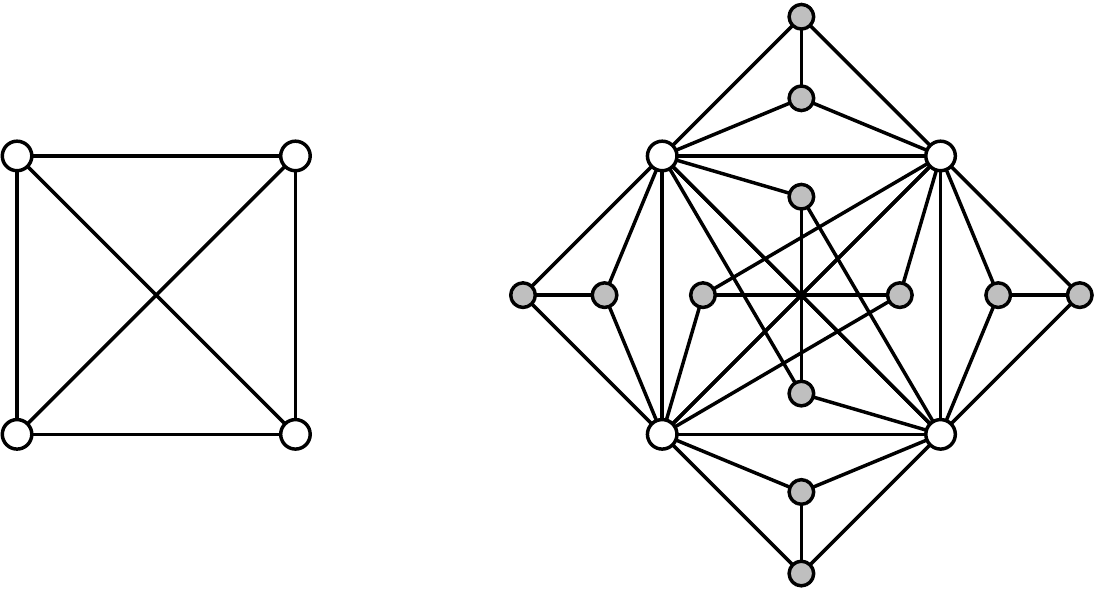}
	\caption{The networks of the first two iterations for $q =2$.} 
	\label{netB}
\end{figure}

For network $\mathcal{G}_q(g)$, let $\mathcal{V}_g$ and $\mathcal{E}_g$ denote its node set and
edge set, respectively. And let $N_g=|\mathcal{V}_g|$ and $M_g=|\mathcal{E}_g|$ denote,
respectively, the number of nodes and the number of edges in graph $\mathcal{G}_q(g)$. Then, for all $g\geqslant0$,
\begin{align}
	&M_g=\left(\frac{(q+1)(q+2)}{2}\right)^{g+1},\\
	&N_g=\frac{2}{q+3}\left(\frac{(q+1)(q+2)}{2}\right)^{g+1}+\frac{2(q+2)}{q+3}.
\end{align}
The  node set $\mathcal{V}_{g+1}$ of $\mathcal{G}_q(g+1)$ can be classified into two disjoint parts $\mathcal{V}_g$ and $\mathcal{W}_{g+1}$,  where $\mathcal{V}_g$ is the set of old nodes belonging to $\mathcal{G}_q(g)$, while $\mathcal{W}_{g+1}$ is the set of new nodes generated in the process of performing aforementioned operation on $\mathcal{G}_q(g)$. Moreover, $\mathcal{W}_{g+1}$ can be further divided into $q$ disjoint subsets $\mathcal{V}^{(1)}_{g+1}$, $\mathcal{V}^{(2)}_{g+1}$, $\ldots$,  $\mathcal{V}^{(i)}_{g+1}$  satisfying $\mathcal{W}_{g+1}=\mathcal{V}^{(1)}_{g+1} \cup  \mathcal{V}^{(2)}_{g+1}\cup \cdots  \cup \mathcal{V}^{(q)}_{g+1}$,  with each $\mathcal{V}^{(i)}_{g+1}$ ($i=1$, $2$,$\ldots$, $q$) including $M_g$ new nodes produced by $M_g$ different edges in $\mathcal{G}_q(g)$. Hence, one has
\begin{equation} \label{div}
	\mathcal{V}_{g+1}  =   \mathcal{V}_{g}{\cup{\mathcal{V}^{(1)}_{g+1}}} {\cup{\mathcal{V}^{(2)}_{g+1}}} {\cup{...}} {\cup{\mathcal{V}^{(q)}_{g+1}}}.
\end{equation}
For any new node $x \in \mathcal{W}_{g+1}$, there are two old neighboring nodes in $\mathcal{V}_{g}$,  the set of which is denoted by $\Gamma(x)$. By construction, for each old edge $uv \in \mathcal{E}_{g}$, there exists one and only one node $x$ in each $\mathcal{V}^{(i)}_{g+1}$ ($i=1$, $2$, $\ldots$, $q$), satisfying  $\Gamma(x)=\{u,v\}$. Therefore, for two different sets $\mathcal{V}^{(i)}_{g+1}$ and $\mathcal{V}^{(j)}_{g+1}$, their nodes have equivalent  structural and dynamical properties.

Let $W_{g+1}=|\mathcal{W}_{g+1}|$ represent the number of those newly nodes generated at iteration $g+1$. Then,
\begin{equation}\label{W}
	W_{g+1}=q\left(\frac{(q+1)(q+2)}{2}\right)^{g+1}.
\end{equation}
Let $d^{(g)}_v$ denote the degree of a node $v$ in graph $\mathcal{G}_q(g)$,  which was generated at iteration $g_v$. Then,
\begin{gather}\label{d_v}
	d^{(g)}_v=(q+1)^{g-g_v+1}.
\end{gather}
In graph $\mathcal{G}_q(g)$, all simultaneously emerging nodes has the same degree. Thus, the number of nodes with  degree $(q+1)^{g-g_v+1}$ is equal to $q+2$ and $q\left(\frac{(q+1)(q+2)}{2}\right)^{g_v}$ for $g_v=0$ and $g_v>0$, respectively.

The resulting family of networks is consist of cliques $\mathcal{K}_{q+2}$ or smaller cliques, and are thus called simplicial networks, characterized by a parameter $q$.  These networks display some remarkable properties that are observed in most real networks~\cite{Ne03}. They are scale-free, since their node degrees obey a power-law distribution $P(d)\sim d^{-\gamma_q}$ with $\gamma_q=2 +\frac{\ln (q+2)}{\ln (q+1)}-\frac{\ln 2}{\ln (q+1)}$~\cite{WaYiXuZh19}. They are  small-world with their diameters increasing logarithmically with the number of nodes   and their mean clustering coefficients converging to a large constant $\frac{q^2+3q+3 }{q^2+3q+5}$~\cite{WaYiXuZh19}. In addition, they have a finite spectral dimension $\frac{2(\ln(q^2+3q+3)-\ln 2)}{\ln (q+1)}$.

\subsection{Relations among Various Matrices}

Let $A_g$  denote the adjacency matrix of graph $\mathcal{G}_q(g)$. The element $A_g(i,j)$ at row $i$ and column $j$ of $A_g$ is defined as follows:  $A_g(i,j)=1$ if nodes $i$ and $j$ are directly connected by an edge in $\mathcal{G}_q(g)$, $A_g(i,j)=0$ otherwise. Let $B_g$ denote the incidence matrix of graph $\mathcal{G}_q(g)$. The element $B_g(i,j)$ at row $i$ and column $j$ of $B_g$ is: $B_g(i,j)=1$ if node $i$ is incident with edge $e_j$ in
$\mathcal{G}_q(g)$, $B_g(i,j)=0$ otherwise. Let $D_g$ denote the diagonal degree matrix of matrix graph $\mathcal{G}_q(g)$, with the $i$th diagonal element being the degree $d_i^{(g)}$ of node $i$. And let $P_{g}=D_g^{-\frac{1}{2}}A_gD_g^{-\frac{1}{2}}$ denote the normalized adjacency matrix of graph $\mathcal{G}_q(g)$. Then
for $\mathcal{G}_q(g+1)$,  its adjacency matrix $A_{g+1}$,  diagonal degree matrix $D_{g+1}$ and normalized adjacency matrix $P_{g+1}$, can be expressed in terms of related matrices of $\mathcal{G}_q(g)$ as
\begin{equation*}
	A_{g+1}=
	\left(
	\begin{array}{ccccc}
		A_g           & B_g     &B_g    & \cdots\   & B_g \\
		B_g^\top      & O       &I  & \cdots\   & I  \\
		B_g^\top      & I       &O  & \cdots\   & I  \\
		\vdots\     & \vdots\  &\vdots  & \ddots\   & \vdots\ \\
		B_g^\top    &I  & I         &\cdots\    & O  \\
	\end{array}
	\right),
\end{equation*}
\begin{equation*}
	D_{g+1}={\rm diag} \{(q+1)D_g,\underbrace{(q+1)I,...,(q+1)I}_q\},
\end{equation*}
and
\begin{small}
	\begin{align}\label{MatrixP}
		&P_{g+1}   =   D_{g+1}^{-\frac{1}{2}} A_{g+1} D_{g+1}^{-\frac{1}{2}}\\
		=&\frac{1}{q+1}
		\left(
		\begin{array}{ccccc}
			P_g    & D_g^{-\frac{1}{2}}B_g &D_g^{-\frac{1}{2}}B_g & \cdots\   & D_g^{-\frac{1}{2}}B_g \\
			B_g^\top D_g^{-\frac{1}{2}} & O   & I              & \cdots\   & I \\
			B_g^\top D_g^{-\frac{1}{2}} & I   & O              & \cdots\   & I \\
			\vdots\                 & \vdots\    &\vdots       & \ddots\   & I \\
			B_g^\top D_g^{-\frac{1}{2}} & I  &I                 & \cdots\   & O \\
		\end{array}
		\right)\notag\,.
	\end{align}
\end{small}

\section{Eigenvalues and Eigenvectors of Normalized Adjacency Matrix}

In this section, we study the eigenvalues and eigenvectors of normalized adjacency matrix $P_{g+1}$  for graph $\mathcal{G}_q(g+1)$,  expressing both eigenvalues and eigenvectors for $P_{g+1}$  in terms of those associated with graph $\mathcal{G}_q(g)$. Then we  use these results to obtain the Kemeny's constant of graph $\mathcal{G}_q(g)$.

For the purpose of analyzing the eigenvalues and eigenvectors of matrix $P_{g+1}$  precisely, we first study the orthonormal basis $\mathcal{Y}_g$ of the kernel space of the following matrix
\begin{equation*}
	C_g:={\underbrace{\left(
			\begin{array}{cccc}
				B_g & B_g & \cdots\ & B_g  \\
			\end{array}
			\right)}_q}.
\end{equation*}
Since $\mathcal{G}_q(g)$ is non-bipartite, by Lemma~\ref{bRank} one has ${\rm rank}(B_g)=N_g$ Thus, $\dim({\rm Ker}(B_g))=M_g-N_g$, ${\rm rank}(C_g)=N_g$, and $\dim({\rm Ker}(C_g))=qM_g-N_g$. Then, $\mathcal{Y}_g$ can be classified into two non-overlapping parts $\mathcal{Y}_g^{(1)}$ and $\mathcal{Y}_g^{(2)}$ obeying $\mathcal{Y}_g=\mathcal{Y}_g^{(1)}\cup\mathcal{Y}_g^{(2)}$, where $\mathcal{Y}_g^{(1)}$ has $M_g-N_g$  vectors, while $\mathcal{Y}_g^{(2)}$ has $(q-1)M_g$ vectors. Moreover, as will shown below, $\mathcal{Y}_g^{(1)}$ and $\mathcal{Y}_g^{2}$ can be  constructed, respectively, by using the orthonormal basis vectors of the kernel space of matrix $B_g$ and the column vectors of the $M_g\times M_g$ identity matrix $I$.

Let $\mathcal{X}_g=\{X_1(g),X_2(g),\ldots,X_{M_g-N_g}(g)\}$ denote the orthonormal basis of the kernel space of matrix $B_g$, and let $Z_i(g)$ denote the $i$th column vector of the $M_g\times M_g$ identity matrix $I$. Then the vectors in $\mathcal{Y}_g^{(1)}$ are
\begin{small}
	\begin{equation*}
		\begin{aligned}
			\frac{1}{\sqrt{q}}\left(
			\begin{array}{c}
				\!X_1^{\top}(g)\!\\
				\!X_1^{\top}(g)\!\\
				\!\vdots\!\\
				\!X_1^{\top}(g)\!\\
			\end{array}
			\right),
			\frac{1}{\sqrt{q}}\left(
			\begin{array}{c}
				X_2^{\top}(g)\\
				X_2^{\top}(g)\\
				\vdots\\
				X_2^{\top}(g)\\
			\end{array}
			\right),
			\cdots
			\frac{1}{\sqrt{q}}\left(
			\begin{array}{c}
				\!X_{M_g-N_g}^{\top}(g)\!\\
				\!X_{M_g-N_g}^{\top}(g)\!\\
				\!\vdots,\!\\
				\!X_{M_g-N_g}^{\top}(g)\!\\
			\end{array}
			\right),
		\end{aligned}
	\end{equation*}
\end{small}
and the vectors in $\mathcal{Y}_g^{(2)}$ are
\begin{small}
	\begin{equation*}
		\begin{aligned}
			\left(
			\begin{array}{c}
				\frac{1}{\sqrt{2}}Z_i(g)\\
				-\frac{1}{\sqrt{2}}Z_i(g)\\
				0\\
				0\\
				\vdots\\
				0
			\end{array}
			\right),
			\left(
			\begin{array}{c}
				\frac{1}{\sqrt{6}}Z_i(g)\\
				\frac{1}{\sqrt{6}}Z_i(g)\\
				-\frac{1}{\sqrt{3}}Z_i(g)\\
				0\\
				\vdots\\
				0
			\end{array}
			\right),\cdots,
			\left(
			\begin{array}{c}
				\frac{1}{\sqrt{q(q-1)}}Z_i(g)\\
				\frac{1}{\sqrt{q(q-1)}}Z_i(g)\\
				\frac{1}{\sqrt{q(q-1)}}Z_i(g)\\
				\vdots\\
				\frac{1}{\sqrt{q(q-1)}}Z_i(g)\\
				-\sqrt{\frac{q-1}{q}}Z_i(g)
			\end{array}
			\right),
		\end{aligned}
	\end{equation*}
\end{small}
where $i=1,2,\ldots,M_g$.

Considering the process of  network construction, we have the following lemmas.
\begin{lemma}\label{y2property}
	For any vector
	\begin{equation*}
		Y_i^{(2)}(g)=\left(
		\begin{array}{c}
			Y_{i1}^{(2)}(g)\\
			Y_{i2}^{(2)}(g)\\
			\vdots\\
			Y_{iq}^{(2)}(g)
		\end{array}
		\right), i=1,2,\ldots,(q-1)M_g,
	\end{equation*}
	in $\mathcal{Y}_g^{(2)}$, its components obey the following relation
	\begin{equation}
		Y_{i1}^{(2)}(g)+Y_{i2}^{(2)}(g)+\cdots+Y_{iq}^{(2)}(g)=0.
	\end{equation}
\end{lemma}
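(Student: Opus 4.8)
The plan is to exploit the explicit block structure of the vectors displayed in the construction of $\mathcal{Y}_g^{(2)}$. Each such vector is built from a single standard basis column $Z_j(g)$ of the $M_g\times M_g$ identity matrix, scaled blockwise by a sequence of scalar coefficients; that is, every $Y_i^{(2)}(g)$ has the form $(c_1 Z_j(g)^\top, c_2 Z_j(g)^\top,\ldots,c_q Z_j(g)^\top)^\top$ for some $j\in\{1,\ldots,M_g\}$, where $(c_1,\ldots,c_q)$ is one of the $q-1$ coefficient patterns exhibited above. Reading off the blocks gives $Y_{ik}^{(2)}(g)=c_k Z_j(g)$, so the claimed sum collapses to $\sum_{k=1}^q Y_{ik}^{(2)}(g)=\bigl(\sum_{k=1}^q c_k\bigr)Z_j(g)$, and it suffices to show that the scalar coefficients of each pattern sum to zero.

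To that end, I would record that the $\ell$-th pattern ($\ell=1,\ldots,q-1$) carries the value $\tfrac{1}{\sqrt{\ell(\ell+1)}}$ in its first $\ell$ blocks, the value $-\sqrt{\tfrac{\ell}{\ell+1}}$ in block $\ell+1$, and zero thereafter. Summing directly yields $\ell\cdot\tfrac{1}{\sqrt{\ell(\ell+1)}}-\sqrt{\tfrac{\ell}{\ell+1}}=\tfrac{\sqrt{\ell}}{\sqrt{\ell+1}}-\tfrac{\sqrt{\ell}}{\sqrt{\ell+1}}=0$, so $\sum_{k=1}^q c_k=0$ for every pattern, and substituting back gives $\sum_{k=1}^q Y_{ik}^{(2)}(g)=0\cdot Z_j(g)=0$, the assertion. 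The cleanest framing, which I would ultimately adopt, is to observe that each coefficient pattern is orthogonal to the all-ones vector $\mathbf{1}_q\in\mathbb{R}^q$: the $q-1$ patterns are exactly a Helmert (Gram--Schmidt) orthonormal basis of the hyperplane $\{c\in\mathbb{R}^q:\mathbf{1}_q^\top c=0\}$, whence $\sum_k c_k=\mathbf{1}_q^\top c=0$ by construction, with no case-by-case arithmetic. This viewpoint also clarifies the role of the lemma downstream: the block-sum-to-zero condition is precisely the statement that these kernel vectors have no component along the ``diagonal'' direction in which the $q$ copies of $B_g$ inside $C_g$ act identically.

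There is no genuine obstacle here; the proof is essentially immediate once the block structure is written out. The only points requiring care are bookkeeping --- matching the single index $i$ ranging over $1,\ldots,(q-1)M_g$ to the pair $(\text{pattern }\ell,\ \text{column }j)$ --- and, if one wants the full claim that $\mathcal{Y}_g^{(2)}$ is orthonormal, confirming that the displayed coefficients indeed form a unit-norm orthogonal family. For the stated identity, however, only the vanishing of $\sum_k c_k$ is needed, and that follows at once from orthogonality of each pattern to $\mathbf{1}_q$.
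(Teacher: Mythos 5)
Your proof is correct and is essentially the argument the paper intends: the paper states this lemma with no proof at all (presenting it as an immediate consequence of the explicit construction of $\mathcal{Y}_g^{(2)}$), and your blockwise reduction to the scalar identity $\ell\cdot\tfrac{1}{\sqrt{\ell(\ell+1)}}-\sqrt{\tfrac{\ell}{\ell+1}}=0$ (equivalently, orthogonality of each Helmert pattern to $\mathbf{1}_q$) is exactly the missing computation. One point worth flagging: your (correct) coefficients give $-\sqrt{2/3}$ in the third block of the second pattern, whereas the paper's display shows $-\tfrac{1}{\sqrt{3}}$; taken literally, the paper's vector is neither unit-norm nor block-sum-zero, so that entry is evidently a typo in the paper which your version silently corrects.
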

\begin{lemma}\label{y2property2}
	For any integer $j\in\{1,2,\ldots,qM_g\}$ and $\mathcal{Y}_g^{(2)}=\{Y_{1}^{(2)}(g),Y_{2}^{(2)}(g),\ldots,$
	$Y_{(q-1)M_g}^{(2)}(g)\}$, we have
	\begin{equation}
		\sum_{i=1}^{(q-1)M_g}\left(Y_{ij}^{(2)}(g)\right)^2=1-\frac{1}{q}.
	\end{equation}
\end{lemma}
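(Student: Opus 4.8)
The plan is to exploit the explicit tensor-product structure of the vectors in $\mathcal{Y}_g^{(2)}$. Reading off the construction in the statement, every vector in $\mathcal{Y}_g^{(2)}$ has the form whose $k$th block ($k=1,\dots,q$) is $h_{\ell k}\,Z_i(g)$, where $Z_i(g)$ is the $i$th column of the $M_g\times M_g$ identity matrix ($i=1,\dots,M_g$) and $h_\ell=(h_{\ell 1},\dots,h_{\ell q})^\top$ ($\ell=1,\dots,q-1$) ranges over the orthonormal system exhibited in the statement (a Helmert basis of the hyperplane $\{x\in\mathbb{R}^q:\sum_k x_k=0\}$). This accounts for exactly $(q-1)M_g$ vectors, matching $|\mathcal{Y}_g^{(2)}|$, so I would first fix this bijection between the single index $i\in\{1,\dots,(q-1)M_g\}$ of the lemma and the pairs $(\ell,i)$.

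First I would index an arbitrary component $j\in\{1,\dots,qM_g\}$ of the full $qM_g$-dimensional vectors by a pair $(k,r)$, where $k\in\{1,\dots,q\}$ selects the block and $r\in\{1,\dots,M_g\}$ the position inside it. Since the $r$th entry of $Z_i(g)$ is $\delta_{ir}$, the $(k,r)$-component of the basis vector indexed by $(\ell,i)$ equals $h_{\ell k}\,\delta_{ir}$. Summing the squares over all basis vectors then gives
\begin{equation*}
\sum_{\ell=1}^{q-1}\sum_{i=1}^{M_g}\big(h_{\ell k}\delta_{ir}\big)^2
=\sum_{\ell=1}^{q-1}h_{\ell k}^2\sum_{i=1}^{M_g}\delta_{ir}
=\sum_{\ell=1}^{q-1}h_{\ell k}^2 ,
\end{equation*}
so the Kronecker delta collapses the sum over $i$ and the value depends only on the block index $k$, not on the position $r$.

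The heart of the argument is therefore to show $\sum_{\ell=1}^{q-1}h_{\ell k}^2=1-\tfrac1q$ for every $k$. I would do this conceptually: adjoining $h_0:=\tfrac{1}{\sqrt q}\mathbf{1}$, the normalized all-ones vector, which is orthogonal to every $h_\ell$ by Lemma~\ref{y2property} (that lemma says precisely $\sum_k h_{\ell k}=0$), completes $\{h_1,\dots,h_{q-1}\}$ to an orthonormal basis of $\mathbb{R}^q$. Hence the matrix with columns $h_0,h_1,\dots,h_{q-1}$ is orthogonal, each of whose rows has unit norm, giving $\sum_{\ell=0}^{q-1}h_{\ell k}^2=1$; subtracting the single term $h_{0k}^2=\tfrac1q$ yields the claim. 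As a self-contained alternative one can plug in the explicit entries and telescope: for fixed $k$ the nonzero contributions are $\tfrac{k-1}{k}$ from $\ell=k-1$ together with $\sum_{\ell=k}^{q-1}\tfrac{1}{\ell(\ell+1)}=\tfrac1k-\tfrac1q$, which again sum to $1-\tfrac1q$.

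I expect no serious obstacle; the only point requiring care is the bookkeeping, namely recognizing the tensor structure and verifying that the answer is genuinely independent of $j$, that is, of both the within-block position $r$ (handled by the $\delta_{ir}$ collapse) and the block index $k$ (handled by the orthogonal-matrix row-norm identity). A minor caveat is that the second displayed basis vector in the statement appears to read $-\tfrac{1}{\sqrt3}Z_i(g)$ where the standard Helmert entry is $-\tfrac{2}{\sqrt6}Z_i(g)$; since the argument uses only that the $h_\ell$ together with $h_0$ form an orthonormal basis of $\mathbb{R}^q$, it is insensitive to this discrepancy.
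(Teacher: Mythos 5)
Your proof is correct and complete. Note that the paper itself offers no proof of this lemma: it is stated, together with Lemma~\ref{y2property}, as an immediate consequence of the explicit form of the vectors in $\mathcal{Y}_g^{(2)}$ (``Considering the process of network construction\ldots''), so your argument supplies exactly the verification the paper leaves implicit. Both steps are sound: the Kronecker-delta collapse of the within-block index reduces the claim to $\sum_{\ell=1}^{q-1}h_{\ell k}^2=1-\frac{1}{q}$ for each block index $k$, and this follows either conceptually---adjoining $h_0=\frac{1}{\sqrt{q}}\mathbf{1}$, which is orthogonal to each $h_\ell$ precisely by the zero-sum property of Lemma~\ref{y2property}, so that the matrix with columns $h_0,\ldots,h_{q-1}$ is orthogonal and hence has unit-norm rows---or by the explicit telescoping $\frac{k-1}{k}+\sum_{\ell=k}^{q-1}\frac{1}{\ell(\ell+1)}=\frac{k-1}{k}+\frac{1}{k}-\frac{1}{q}=1-\frac{1}{q}$. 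You are also right on the minor point: the paper's second displayed basis vector should read $-\frac{2}{\sqrt{6}}Z_i(g)$ rather than $-\frac{1}{\sqrt{3}}Z_i(g)$ for it to be normalized, and your argument is indeed insensitive to this typo since it uses only that the $h_\ell$ form an orthonormal system summing to zero componentwise.
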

\begin{lemma}\label{Spectra}
	Let $1=\lambda_1(g)>\lambda_2(g)\geq...\geq\lambda_{N_g}(g)>-1$ be the eigenvalues of matrix $P_g$, and let $v_1(g),v_2(g),...,v_{N_g}(g)$ be their corresponding orthonormal eigenvectors. Then
	$\frac{\lambda_i(g)+q}{q+1}$, $i=1$, $2$,$\ldots$, $N_g$, are eigenvalues of matrix $P_{g+1}$, and their corresponding orthonormal eigenvectors are
	\begin{equation}\label{spec1}
		\sqrt{\frac{\lambda_i(g)+1}{q+\lambda_i(g)+1}}
		\left(
		\begin{array}{c}
			v_i(g) \\
			\frac{1}{\lambda_i(g)+1} B_g^\top D_g^{-\frac{1}{2}}v_i(g) \\
			\vdots \\
			\frac{1}{\lambda_i(g)+1} B_g^\top D_g^{-\frac{1}{2}}v_i(g) \\
		\end{array}
		\right);
	\end{equation}
	$-\frac{1}{q+1}$'s are eigenvalues of matrix $P_{g+1}$ with multiplicity $(q-1)M_g+N_g$, and the corresponding orthonormal eigenvectors are
	\begin{equation}\label{spec2}
		\sqrt{\frac{q}{q+\lambda_i(g)+1}}
		\left(
		\begin{array}{c}
			v_i(g) \\
			-\frac{1}{q} B_g^\top D_g^{-\frac{1}{2}}v_i(g) \\
			\vdots \\
			-\frac{1}{q} B_g^\top D_g^{-\frac{1}{2}}v_i(g) \\
		\end{array}
		\right),
	\end{equation}
	$i=1$, $2$,$\ldots$, $N_g$, and
	\begin{equation}\label{spec3}
		\\ \left(
		\begin{array}{c}
			0 \\
			Y_z^{(2)}(g) \\
		\end{array}
		\right) ,~z=1,~2,\ldots,~(q-1)M_g,
	\end{equation}
	where $Y_z^{(2)}(g)\in\mathcal{Y}_g^{(2)}$;
	and $\frac{q-1}{q+1}$'s are eigenvalues of $P_{g+1}$ having multiplicity $M_g-N_g$,  with the corresponding orthonormal eigenvectors being
	\begin{equation}\label{spec4}
		\\ \left(
		\begin{array}{c}
			0 \\
			Y_z^{(1)}(g) \\
		\end{array}
		\right) ,~z=1,~2,\ldots,~M_g-N_g,
	\end{equation}
	where $Y_z^{(1)}(g)\in\mathcal{Y}_g^{(1)}$.
\end{lemma}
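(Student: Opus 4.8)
The plan is to prove the statement by direct substitution into the block form \eqref{MatrixP}, driven by a single pivotal identity together with the vanishing block-sum recorded in Lemma~\ref{y2property}. Throughout I abbreviate $v_i := v_i(g)$, $\lambda_i := \lambda_i(g)$ and $w_i := B_g^\top D_g^{-\frac12} v_i$. The identity behind every computation comes from Lemma~\ref{BBAD}: since $B_g B_g^\top = A_g + D_g$, conjugating by $D_g^{-\frac12}$ yields $D_g^{-\frac12} B_g B_g^\top D_g^{-\frac12} = P_g + I$, so that
\begin{equation*}
	D_g^{-\frac12} B_g\, w_i = (P_g+I)v_i = (\lambda_i+1)\,v_i .
\end{equation*}
Because $\mathcal{G}_q(g)$ contains triangles it is non-bipartite, hence every $\lambda_i > -1$ and the factors $\tfrac{1}{\lambda_i+1}$ in \eqref{spec1}--\eqref{spec2} are well defined.

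First I would confirm that each of the four displayed families is an eigenvector with the claimed eigenvalue, working with the unnormalized templates. Applying $P_{g+1}$ to $(v_i;\alpha w_i;\ldots;\alpha w_i)$ with $\alpha=\tfrac{1}{\lambda_i+1}$ (for \eqref{spec1}) or $\alpha=-\tfrac1q$ (for \eqref{spec2}), the top block is $\tfrac{1}{q+1}\big(\lambda_i v_i + q\,\alpha\, D_g^{-\frac12}B_g w_i\big)$; substituting $D_g^{-\frac12}B_g w_i=(\lambda_i+1)v_i$ collapses it to $\tfrac{\lambda_i+q}{q+1}v_i$ and $-\tfrac{1}{q+1}v_i$ respectively, while each lower block simplifies to $\tfrac{1}{q+1}\big(w_i + (q-1)\alpha\, w_i\big)$, matching the same eigenvalue times the lower component. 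For \eqref{spec3} the top block equals $\tfrac{1}{q+1}D_g^{-\frac12}B_g\big(Y_{z1}^{(2)}+\cdots+Y_{zq}^{(2)}\big)$, which vanishes by Lemma~\ref{y2property}, and the $k$-th lower block is $\tfrac{1}{q+1}\sum_{l\neq k}Y_{zl}^{(2)} = \tfrac{1}{q+1}\big(0-Y_{zk}^{(2)}\big) = -\tfrac{1}{q+1}Y_{zk}^{(2)}$, giving eigenvalue $-\tfrac{1}{q+1}$. For \eqref{spec4}, each block of $Y_z^{(1)}$ lies in $\mathrm{Ker}(B_g)$, so the top block again vanishes and the $k$-th lower block is $\tfrac{q-1}{q+1}$ times the common block, giving eigenvalue $\tfrac{q-1}{q+1}$.

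Next I would verify orthonormality. The only nontrivial norm computations use $\|w_i\|^2 = v_i^\top(P_g+I)v_i = \lambda_i+1$ and, more generally, $w_i^\top w_{i'} = (\lambda_{i'}+1)\delta_{ii'}$; these fix the normalizing constants in \eqref{spec1} and \eqref{spec2} and make distinct members within each family orthogonal. Cross-family orthogonality splits into an automatic part and one genuine check. Since $P_{g+1}$ is symmetric, eigenvectors for distinct eigenvalues are orthogonal, and as $\lambda_i>-1$ the family \eqref{spec1} has eigenvalues $\tfrac{\lambda_i+q}{q+1}\in(\tfrac{q-1}{q+1},1]$, which never equal $-\tfrac{1}{q+1}$ or $\tfrac{q-1}{q+1}$. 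The single eigenvalue shared by two families is $-\tfrac{1}{q+1}$, carried by \eqref{spec2} and \eqref{spec3}; their inner product is a scalar multiple of $w_i^\top\big(Y_{z1}^{(2)}+\cdots+Y_{zq}^{(2)}\big)$, which is zero again by Lemma~\ref{y2property}. Orthonormality inside \eqref{spec3}, inside \eqref{spec4}, and between them is inherited from the orthonormal basis $\mathcal{Y}_g=\mathcal{Y}_g^{(1)}\cup\mathcal{Y}_g^{(2)}$ of $\mathrm{Ker}(C_g)$, and the relation $B_g X_j=0$ kills the remaining cross terms against \eqref{spec1}--\eqref{spec2}.

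Finally, completeness follows by counting: the four families contribute $N_g$, $N_g$, $(q-1)M_g$ and $M_g-N_g$ vectors, totalling $N_g + qM_g = N_{g+1} = \dim P_{g+1}$. As these $N_{g+1}$ vectors are mutually orthonormal they are independent and hence form a complete eigenbasis, so no eigenvalue is missed. I expect the main difficulty to be organizational rather than conceptual: keeping the block bookkeeping consistent across the four families, and in particular isolating the one place---the orthogonality of \eqref{spec2} against \eqref{spec3} at the repeated eigenvalue $-\tfrac{1}{q+1}$---where the vanishing block-sum of Lemma~\ref{y2property} is indispensable.
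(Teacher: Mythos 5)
Your proposal is correct and follows essentially the same route as the paper: direct block substitution into Eq.~\eqref{MatrixP}, using $B_gB_g^\top=A_g+D_g$ (equivalently $D_g^{-\frac12}B_gB_g^\top D_g^{-\frac12}=P_g+I$) for the families \eqref{spec1}--\eqref{spec2}, Lemma~\ref{y2property} for \eqref{spec3}, and the kernel property of $B_g$ for \eqref{spec4}. In fact your write-up is more complete than the paper's own proof, which stops at the eigen-equations and dismisses the rest with ``through normalization'': you additionally compute $w_i^\top w_{i'}=(\lambda_{i'}+1)\delta_{ii'}$ to fix the normalizing constants, handle the one genuinely non-automatic orthogonality check (family \eqref{spec2} against \eqref{spec3} at the shared eigenvalue $-\tfrac{1}{q+1}$), and justify the stated multiplicities by the dimension count $N_g+N_g+(q-1)M_g+(M_g-N_g)=N_{g+1}$.
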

\begin{proof}
	For any eigenpair $\lambda_i(g)$ and $v_i(g)$,  $P_gv_i(g)=\lambda_i(g) v_i(g)$ holds. Then, by Lemma~\ref{BBAD} and Eq.~\eqref{MatrixP}, one has
	\begin{small}
		\begin{equation*}
			\begin{aligned}
				&P_{g+1}\!
				\left(
				\begin{array}{c}
					\!\!v_i(g)\!\! \\
					\!\!\frac{1}{\lambda_i(g)+1} B_g^\top D_g^{-\frac{1}{2}}v_i(g)\!\! \\
					\!\!\vdots \!\!\\
					\!\!\frac{1}{\lambda_i(g)+1} B_g^\top D_g^{-\frac{1}{2}}v_i(g)\!\! \\
				\end{array}
				\right)
				\!=\!
				\frac{1}{q\!+\!1}\!\left(
				\begin{array}{c}
					\!\!\lambda_i(g)v_i(g)+qv_i(g) \!\!\\
					\!\!\frac{\lambda_i(g)+q}{\lambda_i(g)+1} B_g^\top D_g^{-\frac{1}{2}}v_i(g) \!\!\\
					\!\!\vdots \!\!\\
					\frac{\lambda_i(g)+q}{\lambda_i(g)+1} B_g^\top \!\!D_g^{-\frac{1}{2}}v_i(g) \!\!\\
				\end{array}
				\right)\\
				&=\frac{\lambda_i(g)+q}{q+1}
				\left(
				\begin{array}{c}
					v_i(g) \\
					\frac{1}{\lambda_i(g)+1} B_g^\top D_g^{-\frac{1}{2}}v_i(g) \\
					\vdots \\
					\frac{1}{\lambda_i(g)+1} B_g^\top D_g^{-\frac{1}{2}}v_i(g) \\
				\end{array}
				\right)
			\end{aligned}
		\end{equation*}
	\end{small}
	and
	\begin{small}
		\begin{equation*}
			\begin{aligned}
				&P_{g+1}
				\left(
				\begin{array}{c}
					v_i(g) \\
					-\frac{1}{q} B_g^\top D_g^{-\frac{1}{2}}v_i(g) \\
					\vdots \\
					-\frac{1}{q} B_g^\top D_g^{-\frac{1}{2}}v_i(g) \\
				\end{array}
				\right)
				=
				\left(
				\begin{array}{c}
					\frac{\lambda_i(g)v_i(g)-(\lambda_i(g)+1)v_i(g)}{q+1} \\
					\frac{1}{q(q+1)} B_g^\top D_g^{-\frac{1}{2}}v_i(g) \\
					\vdots \\
					\frac{1}{q(q+1)} B_g^\top D_g^{-\frac{1}{2}}v_i(g) \\
				\end{array}
				\right)\\
				=&
				-\frac{1}{q+1}
				\left(
				\begin{array}{c}
					v_i(g) \\
					-\frac{1}{q} B_g^\top D_g^{-\frac{1}{2}}v_i(g) \\
					\vdots \\
					-\frac{1}{q} B_g^\top D_g^{-\frac{1}{2}}v_i(g) \\
				\end{array}
				\right),
			\end{aligned}
		\end{equation*}
	\end{small}
	both of which  lead to~\eqref{spec1} and~\eqref{spec2} through normalization.
	
	In addition, according to Lemma~\ref{y2property}, one has
	\begin{small}
		\begin{align*}
			P_{g+1}\left(
			\begin{array}{c}
				0 \\
				Y_z^{(2)}(g) \\
			\end{array}
			\right)&=\frac{1}{q+1}\left(
			\begin{array}{c}
				0 \\
				-Y_{i1}^{(2)}(g)+\sum_{k=1}^qY_{ik}^{(2)}(g) \\
				-Y_{i2}^{(2)}(g)+\sum_{k=1}^qY_{ik}^{(2)}(g) \\
				\vdots\\
				-Y_{iq}^{(2)}(g)+\sum_{k=1}^qY_{ik}^{(2)}(g) \\
			\end{array}
			\right)\\
			&=\frac{1}{q+1}\left(
			\begin{array}{c}
				0 \\
				-Y_{i1}^{(2)}(g) \\
				-Y_{i2}^{(2)}(g)\\
				\vdots\\
				-Y_{iq}^{(2)}(g)\\
			\end{array}
			\right)\\
			&=-\frac{1}{q+1}\left(
			\begin{array}{c}
				0 \\
				Y_z^{(2)}(g) \\
			\end{array}
			\right),
		\end{align*}
	\end{small}	
	as claimed by~\eqref{spec3}.
	
	Finally, for each $z=1,~2,\ldots,~(q-1)M_g$,
	\begin{small}
		\begin{align*}
			P_{g+1}\left(
			\begin{array}{c}
				0 \\
				Y_z^{(1)}(g) \\
			\end{array}
			\right)
			&=P_{g+1}\left(
			\begin{array}{c}
				X_i(g) \\
				X_i(g) \\
				\vdots\\
				X_i(g)\\
			\end{array}
			\right)\\
			&=\frac{1}{q+1}\left(
			\begin{array}{c}
				0 \\
				(q-1)X_{i}(g) \\
				(q-1)X_{i}(g)\\
				\vdots\\
				(q-1)X_{i}(g)\\
			\end{array}
			\right)\\
			&=\frac{q-1}{q+1}\left(
			\begin{array}{c}
				0 \\
				Y_z^{(2)}(g) \\
			\end{array}
			\right).
		\end{align*}
	\end{small}	
	Thus, we complete the proof.
\end{proof}

In fact, the orthonormal eigenvectors of $\mathcal{G}_q(g+1)$  can be expressed in more explicit forms. By Eq.~\eqref{eig=1} and Lemma~\ref{Spectra}, one can easily derive the following results.
\begin{corollary}\label{corospec}
	Let $1=\lambda_1(g)>\lambda_2(g)\geq...\geq\lambda_{N_g}(g)> -1$ be the eigenvalues of matrix $P_g$, and let $v_1(g),v_2(g),...,v_{N_g}(g)$ be their corresponding orthonormal eigenvectors. Then,
	\begin{enumerate}
		\item The eigenvectors corresponding to eigenvalues $\frac{\lambda_1(g)+q}{q+1}=1$, the first $-\frac{1}{q+1}$ for matrix $P_{g+1}$ are
		\begin{equation}\label{re1A}
			\begin{aligned}
				&\left(\sqrt{\frac{d_1(g)}{M_g(q+2)}}, \cdots, \sqrt{\frac{d_{N_g}(g)}{M_g(q+2)}},\right.\\
				&\left.\frac{1}{\sqrt{M_g(q+2)}},  \cdots, \frac{1}{\sqrt{M_g(q+2)}} \right)^\top
			\end{aligned}
		\end{equation}
		and
		\begin{equation}\label{re1B}
			\begin{aligned}
				&\left(\sqrt{\frac{qd_1(g)}{2M_g(q+2)}}, \cdots, \sqrt{\frac{qd_{N_g}(g)}{2M_g(q+2)}},\right.\\
				&\left.-\sqrt{\frac{2}{qM_g(q+2)}}, \cdots,-\sqrt{\frac{2}{qM_g(q+2)}}\right)^\top,
			\end{aligned}
		\end{equation}
		respectively.
		\item
		The element of orthonormal eigenvectors for eigenvalues $\frac{\lambda_i(g)+q}{q+1}$, $i=2,3,\ldots,N_g$, corresponding to node $j$ is
		\begin{equation*}
			\left\{
			\begin{array}{ll}
				\!\!\!\sqrt{\frac{\lambda_i(g)+1}{\lambda_i(g)+q+1}}v_{ij}(g)&j\in \mathcal{V}_g\\
				\!\!\!\sqrt{\frac{1}{(\lambda_i(g)+1)(\lambda_i(g)+q+1)}}\left(\frac{v_{is}(g)}{\sqrt{d_s(g)}}\!+\!\frac{v_{it}(g)}{\sqrt{d_t(g)}}\right)&\!\!\!\!\!j\in \mathcal{W}_{g+1};
			\end{array}\right.
		\end{equation*}
		and the element of orthonormal eigenvectors for eigenvalues $-\frac{1}{q+1}$, $i=2,3,\ldots,N_g$, corresponding to node $j$ is
		\begin{equation*}
			\left\{\begin{array}{ll}
				\sqrt{\frac{q}{\lambda_i(g)+q+1}}v_{ij}(g),&j\in \mathcal{V}_g,\\
				\sqrt{\frac{1}{q(\lambda_i(g)+q+1)}}\Big(\frac{v_{is}(g)}{\sqrt{d_s(g)}}+\frac{v_{it}(g)}{\sqrt{d_t(g)}}\Big),&j\in \mathcal{W}_{g+1}
			\end{array}\right.
		\end{equation*}
		where $\Gamma(j)=\{s,t\}$;
		\item For orthonormal eigenvectors $
		\left(
		\begin{array}{c}
			0 \\
			Y_z(g) \\
		\end{array}
		\right)
		$, $z=1,2,\ldots,qM_g-N_g$, of eigenvalues $0's$ of matrix $P_{g+1}$, we have
		\begin{small}
			\begin{align}\label{sum1}
				\sum_{z=1}^{qM_g-N_g}Y_{zj}^{2}(g)=&1-\frac{1}{qM_g}-\sum_{k=2}^{N_g} \frac{1}{(1+\lambda_k(g))q}\\
				&\quad\quad\quad\quad\quad\quad~\left(\frac{v_{ks}(g)}{\sqrt{d_s(g)}}+\frac{v_{kt}(g)}{\sqrt{d_t(g)}}\right)^2\notag.
			\end{align}
		\end{small}
		for each $j\in\mathcal{W}_{g+1}$ with $\Gamma(j)=\{s,t\}$.
	\end{enumerate}
\end{corollary}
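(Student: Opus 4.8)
The plan is to derive all three parts from Lemma~\ref{Spectra} combined with the explicit stationary eigenvector of Eq.~\eqref{eig=1}, treating them in increasing order of difficulty. Parts~1 and~2 are purely substitutional. For any eigenpair $(\lambda_i(g),v_i(g))$, the lower (new-node) blocks of the eigenvectors in Eqs.~\eqref{spec1} and~\eqref{spec2} are controlled by $B_g^\top D_g^{-\frac12}v_i(g)$; since the row of $B_g^\top$ indexed by an edge $e=st$ has ones in exactly the positions $s$ and $t$, the coordinate of $B_g^\top D_g^{-\frac12}v_i(g)$ at a new node $j$ with $\Gamma(j)=\{s,t\}$ equals $u_i:=\frac{v_{is}(g)}{\sqrt{d_s(g)}}+\frac{v_{it}(g)}{\sqrt{d_t(g)}}$. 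Inserting this, together with the normalizing scalars $\sqrt{(\lambda_i(g)+1)/(q+\lambda_i(g)+1)}$ and $\sqrt{q/(q+\lambda_i(g)+1)}$, reproduces the two piecewise expressions of part~2. Specializing to $i=1$, where Eq.~\eqref{eig=1} gives $v_{1j}(g)=\sqrt{d_j(g)/(2M_g)}$ and hence $u_1=\sqrt{2/M_g}$, yields Eqs.~\eqref{re1A} and~\eqref{re1B} of part~1 after routine simplification.

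The substantive claim is part~3, and the driving tool is the completeness (Parseval) relation. First I would check that the four families listed in Lemma~\ref{Spectra} form a complete orthonormal basis: their count is $N_g+N_g+(q-1)M_g+(M_g-N_g)=N_g+qM_g=N_{g+1}$, which matches the dimension, and because $P_{g+1}$ is symmetric the orthonormal eigenvectors $\{w_k\}$ satisfy $\sum_k w_k w_k^\top=I$. Reading off the $(j,j)$ diagonal entry for a fixed new node $j$ with $\Gamma(j)=\{s,t\}$ then asserts that the sum of squared $j$-th components over all four families equals $1$.

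Next I would tabulate each family's contribution at the coordinate $j$. The family~\eqref{spec1} contributes $\frac{u_i^2}{(\lambda_i(g)+1)(q+\lambda_i(g)+1)}$ and the family~\eqref{spec2} contributes $\frac{u_i^2}{q(q+\lambda_i(g)+1)}$, for each $i=1,\dots,N_g$; the vectors of Eqs.~\eqref{spec3} and~\eqref{spec4}, which together are precisely the $qM_g-N_g$ vectors with zero upper block and lower block $Y_z(g)$ ranging over $\mathcal{Y}_g=\mathcal{Y}_g^{(1)}\cup\mathcal{Y}_g^{(2)}$, contribute $\sum_{z=1}^{qM_g-N_g}Y_{zj}^2(g)$. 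The algebraic heart is the elementary identity $\frac{1}{(\lambda+1)(q+\lambda+1)}+\frac{1}{q(q+\lambda+1)}=\frac{1}{q(\lambda+1)}$, which merges the two eigenvector families sharing index $i$ into the single term $\frac{u_i^2}{q(\lambda_i(g)+1)}$. Isolating $i=1$ via $\lambda_1(g)=1$ and $u_1=\sqrt{2/M_g}$ produces $\frac{u_1^2}{2q}=\frac{1}{qM_g}$, and transposing this term plus the remaining sum across the completeness identity gives exactly Eq.~\eqref{sum1}.

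The main obstacle is conceptual rather than computational: one must resist computing the kernel basis $\mathcal{Y}_g$ of $C_g$ explicitly---it is not available in closed form---and instead use completeness to express the otherwise-inaccessible quantity $\sum_z Y_{zj}^2(g)$ as \emph{one minus everything else}. The remaining work is bookkeeping: confirming the four families exhaust an orthonormal basis (the dimension count above), verifying the fraction identity that collapses the $\lambda_i$-indexed families, and correctly peeling off the stationary $i=1$ term so that the residual sum runs from $k=2$ as stated. A pleasant consistency check, though not needed for the argument, is that the upper-block contributions depend only on the edge $e=st$ and not on which of the $q$ copies contains $j$, so $\sum_z Y_{zj}^2(g)$ is automatically the same for all $q$ new nodes built from the same old edge, exactly as the right-hand side of Eq.~\eqref{sum1} reflects.
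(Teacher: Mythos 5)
Your proposal is correct and follows exactly the route the paper leaves implicit when it says the corollary follows ``by Eq.~\eqref{eig=1} and Lemma~\ref{Spectra}'': substitution of the edge-incidence structure of $B_g^\top D_g^{-\frac{1}{2}}v_i(g)$ into~\eqref{spec1} and~\eqref{spec2} for parts~1 and~2, and a Parseval/completeness argument over the full orthonormal eigenbasis of $P_{g+1}$ (with the collapsing identity $\frac{1}{(\lambda+1)(q+\lambda+1)}+\frac{1}{q(q+\lambda+1)}=\frac{1}{q(\lambda+1)}$) for part~3. All computations check, including the dimension count $N_g+qM_g=N_{g+1}$ and the isolation of the $i=1$ term $\frac{1}{qM_g}$.
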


\section{Two-Node Hitting Time and  Kemeny's Constant}

Lemma~\ref{Spectra} and Corollary~\ref{corospec} provide  complete information about the eigenvalues and eigenvectors of matrix $P_{g+1}$ in terms of those of matrix $P_g$ of the previous iteration. In this section, we use this information to determine two-node hitting time and the Kemeny's constant for unbiased random walks on graph $\mathcal{G}_q(g+1)$.

\subsection{Two-Node Hitting Time}

We first present our results about  hitting times for random walks on graph $\mathcal{G}_q(g)$. Let $T_{ij}(g)$ denote the hitting time from node $i$ to node $j$ in $\mathcal{G}_q(g)$.
\begin{theorem}\label{qTriHT}
	For networks $\mathcal{G}_q(g)$ and $\mathcal{G}_q(g+1)$,
	\begin{enumerate}
		\item if $i$, $j\in \mathcal{V}_g$, then $T_{ij}(g+1)=(q+1)T_{ij}(g)$;
		\item if $i\in \mathcal{W}_{g+1}$, $j\in \mathcal{V}_g$, $\Gamma(i)=\{s,t\}$, then
		\begin{small}
			\begin{equation*}
				\begin{aligned}
					T_{ij}(g+1)=& \frac{q+1}{2}+\frac{q+1}{2}\left(T_{sj}(g)+T_{tj}(g)\right),\\
					T_{ji}(g+1)=&\frac{3(q+1)}{2}M_g-\frac{q+1}{2}+\frac{q+1}{4}\\
					&\cdot\left(2\left(T_{js}(g)+T_{jt}(g)\right)-\left(T_{ts}(g)+T_{st}(g)\right)\right);
				\end{aligned}
			\end{equation*}
		\end{small}
		
		\item if $i$, $j\in \mathcal{W}_{g+1}$, (a) $i$ is adjacent to $j$, then
		\begin{small}
			\begin{equation*}
				T_{ij}(g+1) = (q+1)M_g,
			\end{equation*}
		\end{small}
		(b) else if $i$ is not adjacent to $j$, $\Gamma(i)=\{s,t\}$, and $\Gamma(j)=\{u,v\}$, then
		\begin{small}
			\begin{equation*}
				\begin{aligned} T_{ji}(g+1)=&\frac{3(q+1)}{2}M_g+\frac{q+1}{4}\left(T_{su}(g)+T_{tu}(g)\right.\\&\left.+T_{sv}(g)+T_{tv}(g)-(T_{uv}(g)+T_{vu}(g))\right).
				\end{aligned}
			\end{equation*}
		\end{small}
	\end{enumerate}
\end{theorem}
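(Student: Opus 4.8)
The plan is to feed the recursive spectral data of Lemma~\ref{Spectra} and Corollary~\ref{corospec} into the closed form of Lemma~\ref{HitTime}, using $2M_{g+1}=(q+1)(q+2)M_g$ and the degree rescaling $d_x^{}(g+1)=(q+1)d_x(g)$ for $x\in\mathcal V_g$, and to organize the sum over the spectrum of $P_{g+1}$ into the three families of Lemma~\ref{Spectra}: the lifted eigenvalues $\tfrac{\lambda_k(g)+q}{q+1}$, the value $-\tfrac1{q+1}$, and the value $\tfrac{q-1}{q+1}$. I would begin with the case $i,j\in\mathcal V_g$, where every $Y$-type eigenvector $(0,Y_z)^\top$ vanishes on old coordinates, so only the two families built from $v_k(g)$ contribute, with weights $\tfrac{q+1}{1-\lambda_k(g)}$ and $\tfrac{q+1}{q+2}$. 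The decisive step is the identity
\[
\frac{\lambda+1}{1-\lambda}\cdot\frac{1}{\lambda+q+1}+\frac{q}{q+2}\cdot\frac{1}{\lambda+q+1}=\frac{2}{(1-\lambda)(q+2)},
\]
which collapses both families into a single multiple of the level-$g$ summand and immediately gives $T_{ij}(g+1)=(q+1)T_{ij}(g)$. Applying the same identity to the diagonal part alone shows that the self-potential $\phi_x(g):=2M_g\sum_{k\ge2}\tfrac{v_{kx}^2(g)}{(1-\lambda_k(g))\,d_x(g)}$ likewise rescales, $\phi_x(g+1)=(q+1)\phi_x(g)$ for $x\in\mathcal V_g$, a fact I record for later.

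For any hitting time whose \emph{source} is a new node I would bypass the spectrum and use first-step analysis together with the equivalence of all $q$ new nodes lying on a common old edge (stated in the construction). A new node $i$ with $\Gamma(i)=\{s,t\}$ has neighbours $s$, $t$ and its $q-1$ clique-mates, all of which share its hitting time to a fixed external target; solving the resulting one-line equation gives $T_{ij}(g+1)=\tfrac{q+1}{2}+\tfrac12\big(T_{sj}(g+1)+T_{tj}(g+1)\big)$ for $j\in\mathcal V_g$, and the old-old case turns this into the stated formula. The same device reduces a new-to-new hitting time to old-to-new ones: for nonadjacent $i,j$ one again gets $T_{ij}(g+1)=\tfrac{q+1}{2}+\tfrac12(T_{sj}(g+1)+T_{tj}(g+1))$ with $s,t$ old, while for adjacent $i,j$ the clique-mate count changes and yields $T_{ij}(g+1)=\tfrac{q+1}{3}+\tfrac13(T_{uj}(g+1)+T_{vj}(g+1))$. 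Thus the entire theorem rests on one genuinely new quantity, the old-to-new hitting time.

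To obtain $T_{ji}(g+1)$ with $j\in\mathcal V_g$, $i\in\mathcal W_{g+1}$, I would use the antisymmetry identity that falls out of Lemma~\ref{HitTime}, $T_{ab}-T_{ba}=\phi_b-\phi_a$, so that old-to-new $=$ new-to-old $+\,\phi_i(g+1)-\phi_j(g+1)$. The first summand is already known and $\phi_j(g+1)=(q+1)\phi_j(g)$, so everything reduces to the self-potential $\phi_i(g+1)$ of a \emph{new} node. This is the one place the degenerate eigenspaces cannot be avoided, since the eigenvalues $-\tfrac1{q+1}$ and $\tfrac{q-1}{q+1}$ carry the large $Y$-blocks whose individual eigenvectors are intractable. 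The escape is that $\phi_i(g+1)$ needs only $\sum_z Y_{zi}^2$: Lemma~\ref{y2property2} evaluates the $\mathcal Y_g^{(2)}$ part as $\tfrac{q-1}{q}$, and subtracting it from the completeness relation~\eqref{sum1} isolates the $\mathcal Y_g^{(1)}$ part. Feeding these, together with the explicit new-node components of Corollary~\ref{corospec}(2) and the surviving $k=1$ terms from~\eqref{re1A}--\eqref{re1B}, into the diagonal sum, and re-expressing the quadratic form $\big(\tfrac{v_{ks}(g)}{\sqrt{d_s(g)}}+\tfrac{v_{kt}(g)}{\sqrt{d_t(g)}}\big)^2$ through $\phi_s(g)$, $\phi_t(g)$ and the commute time $C_{st}(g)=T_{st}(g)+T_{ts}(g)$, produces a closed form for $\phi_i(g+1)$ depending only on the neighbours $s,t$ of $i$. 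Substituting back gives $T_{ji}(g+1)$, and the nonadjacent new-to-new formula then follows from the reduction above.

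The main obstacle, and the step I would treat most carefully, is precisely this new-node self-potential: one must separate the two degenerate families $\mathcal Y_g^{(1)},\mathcal Y_g^{(2)}$ (which admit no common weight because their eigenvalues differ), retain the $k=1$ contributions that survive in the $-\tfrac1{q+1}$ family, and verify that all dependence on the source $j$ cancels after invoking the level-$g$ antisymmetry $T_{js}(g)-T_{sj}(g)=\phi_s(g)-\phi_j(g)$. This cancellation is both the key simplification and a strong internal consistency check, confirming that $\phi_i(g+1)$ is a property of $i$ alone. A purely spectral derivation of $T_{ji}(g+1)$ is also possible, but it would force the two-new-node cross sums $\sum_z Y_{zi}Y_{zj}$, which the potential identity sidesteps entirely; this is why I would route the reverse directions through $\phi$ rather than through Lemma~\ref{HitTime} directly.
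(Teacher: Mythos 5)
Your strategy for parts 1, 2 (new-to-old), and 3 coincides with the paper's: the spectral collapse identity for old--old pairs, then first-step analysis whenever the source lies in $\mathcal{W}_{g+1}$. For the old-to-new direction you take a genuinely different route: instead of evaluating $T_{ji}(g+1)$ directly from Lemma~\ref{HitTime} with all cross terms (as the paper does), you route it through the potential identity $T_{ab}-T_{ba}=\phi_b-\phi_a$ and the new-node self-potential $\phi_i(g+1)$. That route is viable, but the fact you ``record for later'' --- $\phi_x(g+1)=(q+1)\phi_x(g)$ for $x\in\mathcal{V}_g$ --- is false, and the argument as written breaks there. The inference ``apply the same identity to the diagonal part alone'' overlooks the lifted Perron eigenvector~\eqref{re1B}: it has eigenvalue $-\tfrac{1}{q+1}$ and entry $\sqrt{qd_x(g)/(2M_g(q+2))}$ at an old node $x$, so in the two-node summand of part 1 its diagonal and cross contributions cancel exactly (which is precisely why part 1 sees only the two $k\ge 2$ families and your collapsing identity), but in $\phi_x(g+1)$ there is no cross term to cancel it. Keeping this term gives
\begin{equation*}
\phi_x(g+1)=(q+1)\phi_x(g)+\frac{q(q+1)}{2(q+2)},\qquad x\in\mathcal{V}_g,
\end{equation*}
and the missing constant propagates: your final $T_{ji}(g+1)$ would exceed the true value by exactly $\frac{q(q+1)}{2(q+2)}$. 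Your internal consistency check (cancellation of all $j$-dependence) cannot detect this, because the defect is independent of $j$.

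A concrete counterexample: take $q=1$, $g=0$, so $\mathcal{G}_1(1)$ is a triangle $\{1,2,3\}$ plus three new nodes; with $\Gamma(a)=\{1,2\}$, first-step analysis gives $T_{a1}(1)=3$ and $T_{1a}(1)=8$, hence $\phi_a(1)-\phi_1(1)=5$, and combining with $\sum_x \pi_x\phi_x(1)=K_1=\tfrac{14}{3}$ forces $\phi_1(1)=3$ and $\phi_a(1)=8$; your claim instead gives $\phi_1(1)=2K(\mathcal{K}_3)=\tfrac{8}{3}\neq 3$. Two cautions if you repair this. First, in the spectral computation of $\phi_i(g+1)$ for $i\in\mathcal{W}_{g+1}$ you must use $d_i(g+1)=q+1$ (the $q$ new nodes on an edge form a clique $\mathcal{K}_q$), not $2$; the paper's own proof carries the typo ``$d_i(g+1)=2$'', which is correct only for $q=1$. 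With that degree and the corrected recursion your plan does close: the three families combine with coefficient $\frac{q+1}{2(q+2)(1-\lambda_k(g))}$ on $\bigl(\frac{v_{ks}(g)}{\sqrt{d_s(g)}}+\frac{v_{kt}(g)}{\sqrt{d_t(g)}}\bigr)^2$, yielding $\phi_i(g+1)=\tfrac{3(q+1)}{2}M_g-\tfrac{(q+1)(q+4)}{2(q+2)}+\tfrac{q+1}{4}\bigl(2\phi_s(g)+2\phi_t(g)-C_{st}(g)\bigr)$, which together with the corrected old-node recursion reproduces the theorem. Second, your stated reason for avoiding the paper's direct route is moot: for $j\in\mathcal{V}_g$ every $Y$-type eigenvector vanishes at $j$, so the cross sums $\sum_z Y_{zi}Y_{zj}$ never arise for old-to-new pairs; they would only appear for new--new pairs, which both you and the paper dispatch by first-step analysis anyway.
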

\begin{proof}
	Note that $M_{g+1}=\frac{(q+1)(q+2)}{2}M_g$, $d_i(g+1)=(q+1)d_i(g)$ for $i\in \mathcal{V}_g$, and $d_i(g+1)=2$ for $i\in \mathcal{W}_{g+1}$.
	
	We first prove  1). By Lemmas~\ref{HitTime} and~\ref{Spectra}, one has
	\begin{small}
		\begin{equation*}
			\begin{aligned}
				&T_{ij}(g+1)\\
				=&2M_q(g+1)\sum_{k=2}^{N_g}
				\Bigg(\frac{1}{1-\frac{\lambda_k(g)+q}{q+1}}
				\frac{\lambda_k(g)+1}{\lambda_k(g)+q+1}
				\\
				&\quad\quad\quad\quad\quad\quad\quad~+\frac{1}{1+\frac{1}{q+1}}
				\frac{q}{\lambda_k(g)+q+1}\Bigg)\\&
				\quad\quad\quad\quad\quad\quad~~~\left(\frac{v_{kj}(g)^2}{(q+1)d_j(g)}-\frac{v_{ki}(g)v_{kj}(g)}{(q+1)\sqrt{d_i(g)d_j(g)}}\right)\\
				=&  2M_g\frac{(q+1)(q+2)}{2}\sum_{k=2}^{N_g}\frac{2q+2}{q+2}\frac{1}{1-\lambda_k(g)}\\
				&\quad\quad\quad\quad\quad\quad\quad\quad\quad~~\left(\frac{v_{kj}(g)^2}{(q+1)d_j(g)}-\frac{v_{kj}(g)v_{ki}(g)}{(q+1)\sqrt{d_i(g)d_j(g)}}\right)\\
				=&   (q+1)\cdot2M_g\sum_{k=2}^{N_g}\frac{1}{1-\lambda_k(g)}
				\left(\frac{v_{kj}(g)^2}{d_j(g)}-\frac{v_{kj}(g)v_{ki}(g)}{\sqrt{d_i(g)d_j(g)}}\right)\\
				=&    (q+1)T_{ij}(g).
			\end{aligned}
		\end{equation*}
	\end{small}
	Thus 1) is proved.
	
	
	We continue to prove 2). Since $\Gamma(i)=\{s,t\}$,
	\begin{equation*}
		\begin{aligned}
			T_{ij}(g+1)=&\frac{1}{q+1}\left(1+T_{sj}(g+1)+1+T_{tj}(g+1)\right.\\&\left.\quad\quad~+(q-1)(1+T_{ij}(g+1))\right)\\
			=&\frac{q+1}{2}+\frac{1}{2}\left(T_{sj}(g+1)+T_{tj}(g+1)\right)\\
			=&\frac{q+1}{2}+\frac{q+1}{2}\left(T_{sj}(g)+T_{tj}(g)\right).\\
		\end{aligned}
	\end{equation*}
	While for $T_{ji}(g+1)$, by Lemmas~\ref{HitTime},~\ref{y2property2},~\ref{Spectra} and Corollary~\ref{corospec}, one obtains
	\begin{small}
		\begin{equation*}
			\begin{aligned}
				&T_{ji}(g+1)\\
				=&  2M_q(g+1)\Bigg(\frac{1}{1+\frac{1}{q+1}}\frac{1}{q(q+1)M_g}
				+\sum_{k=2}^{N_g}\frac{1}{q+1}\\
				&\left(\frac{v_{ks}(g)}{\sqrt{d_s(g)}}\!\!+\!\!\frac{v_{kt}(g)}{\sqrt{d_t(g)}}\right)^2\!\!   \Bigg(\!\!\left(\frac{1}{1\!\!-\!\!\frac{\lambda_k(g)\!+\!q}{q\!+\!1}}
				\frac{1}{(\lambda_k(g)\!\!+\!\!1)(\lambda_k(g)\!\!+\!\!q\!\!+\!\!1)}\right)\\
				&+\left(\frac{1}{1-\frac{1}{q+1}}
				\frac{1}{q(\lambda_k(g)+q+1)}\right)\Bigg)-\sum_{k=2}^{N_g}\frac{v_{kj}(g)}{\sqrt{(q+1)^2d_j(g)}}\\
				&\left(\frac{v_{ks}(g)}{\sqrt{d_s(g)}}\!\!+\!\!\frac{v_{kt}(g)}{\sqrt{d_t(g)}}\right)\!\Bigg(\!\!\Bigg(\frac{1}{1\!\!-\!\!\frac{\lambda_k(g)\!+\!q}{q\!+\!1}}
				\sqrt{\frac{1}{(\lambda_k(g)\!\!+\!\!1)(\lambda_k(g)\!\!+\!\!q\!\!+\!\!1)}}\\
				&\sqrt{\frac{\lambda_k(g)\!+\!1}{\lambda_k(g)\!+\!q\!+\!1}}\Bigg)\!-\!\left(\!\frac{1}{1\!+\!\frac{1}{q\!+\!1}}\sqrt{\frac{1}{q(\lambda_k(g)\!\!+\!q\!\!+\!\!1)}}\sqrt{\frac{q}{\lambda_k(g)\!\!+\!\!q\!\!+\!\!1}}\right)\!\!\Bigg)\\
				&+\frac{3(q+1)}{2(q+2)}-\frac{1}{2qM_g}-\sum_{k=2}^{N_g} \frac{1}{2q(1\!+\!\lambda_k(g))}\bigg(\frac{v_{ks}(g)}{\sqrt{d_s(g)}}\!+\!\frac{v_{kt}(g)}{\sqrt{d_t(g)}}\bigg)^2\\
				=&\frac{3(q+1)}{2}M_g-\frac{q+1}{2}+(q+1)M_g\sum_{k=2}^{N_g}\frac{1}{1-\lambda_k(g)}\\
				&\Bigg(\left(\frac{v_{ks}(g)^2}{d_s(g)}-\frac{v_{ks}(g)v_{kj}(g)}{\sqrt{d_s(g)d_j(g)}}\right)+\left(\frac{v_{kt}(g)^2}{d_t(g)}-\frac{v_{kt}(g)v_{kj}(g)}{\sqrt{d_t(g)d_j(g)}}\right)\\  &-\frac{1}{2}\left(\frac{v_{ks}(g)}{\sqrt{d_s(g)}}-\frac{v_{kt}(g)}{\sqrt{d_t(g)}}\right)^2\Bigg)\\
				=&\frac{3(q+1)}{2}M_g-\frac{q+1}{2}+\frac{q+1}{4}\big(2\left(T_{js}(g)+T_{jt}(g)\right)\\
				&-\left(T_{ts}(g)+T_{st}(g)\right)\big).
			\end{aligned}
		\end{equation*}
	\end{small}
	
	We finally prove 3). (a) If $i$ is adjacent to $j$, then $\Gamma(i)=\Gamma(j)=\{s,t\}$. In this case,  we obtain
	\begin{align*}
		T_{ij}(g+1)=&\frac{1}{q+1}\left(1+T_{sj}(g+1)+1+T_{tj}(g+1)\right.\\
		&\left.\quad\quad~+q-1+(q-2)T_{ij}(g+1)\right)\\
		=&\frac{q+1}{3}+\frac{1}{3}\left(T_{sj}(g+1)+T_{tj}(g+1)\right)\\
		=&\frac{q+1}{6}\left(T_{ts}(g)+T_{st}(g)-\left(T_{st}(g)+T_{ts}(g)\right)\right)\\&+(q+1)M_g\\
		=&(q+1)M_g.
	\end{align*}
	(b) If $i$ is not adjacent to $j$, considering $\Gamma(i)=\{s,t\}$, $\Gamma(j)=\{u,v\}$, we obtain
	\begin{equation*}
		\begin{aligned}
			T_{ij}(g+1)=&\frac{1}{q+1}\left(1+T_{sj}(g+1)+1+T_{tj}(g+1)\right.\\&\left.\quad\quad~+q-1+(q-1)T_{ij}(g+1)\right)\\
			=&\frac{q+1}{2}+\frac{1}{2}\left(T_{sj}(g+1)+T_{tj}(g+1)\right)\\
			=&\frac{q+1}{4}\left(T_{su}(g)+T_{tu}(g)+T_{sv}(g)+T_{tv}(g)\right.\\
			&\left.\quad\quad~-\left(T_{uv}(g)+T_{vu}(g)\right)\right)+\frac{3(q+1)}{2}M_g.
		\end{aligned}
	\end{equation*}
	This completes the proof.
\end{proof}

\subsection{Kemeny's Constant}

With Lemmas~\ref{lemmaKem} and~\ref{Spectra},  the Kemeny's constant of $\mathcal{G}_q(g)$ can be determined explicitly.
\begin{theorem}\label{Kemeny}
	Let $K_g$ be the Kemeny's constant for random walk in $\mathcal{G}_q(g)$. Then, for all $g\geq0$,
	\begin{align}\label{KemenyE}
		K_g=&\left(\frac{(q+1)^2}{q+2}-\frac{3(q+1)}{2}\right)(q+1)^g\\
		&+\frac{(q+1)(3q+7)}{2(q+3)}\left(\frac{(q+1)(q+2)}{2}\right)^g+\frac{q+1}{q+3}\notag.
	\end{align}
	When $g\to\infty$,
	\begin{equation}
		\lim_{g\to\infty}K_g=\frac{3q+7}{2(q+2)}N_g.
	\end{equation}
\end{theorem}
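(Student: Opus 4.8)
The plan is to compute the Kemeny's constant $K_g$ by directly exploiting Lemma~\ref{lemmaKem}, which tells us that $K_g = \sum_{k=2}^{N_g} \frac{1}{1-\lambda_k(g)}$ depends only on the eigenvalues of $P_g$, together with the complete spectral description in Lemma~\ref{Spectra}. The key observation is that Lemma~\ref{Spectra} partitions the eigenvalues of $P_{g+1}$ into four groups whose contributions to $\sum 1/(1-\lambda)$ I can track separately, yielding a recursion for $K_{g+1}$ in terms of $K_g$. First I would set up the recursion by summing $\frac{1}{1-\lambda}$ over each eigenvalue class.

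Concretely, the eigenvalues of $P_{g+1}$ are: the $N_g$ values $\frac{\lambda_i(g)+q}{q+1}$ (for which $\frac{1}{1-\lambda}=\frac{q+1}{1-\lambda_i(g)}$), the value $-\frac{1}{q+1}$ with multiplicity $(q-1)M_g+N_g$ (each contributing $\frac{q+1}{q+2}$), and the value $\frac{q-1}{q+1}$ with multiplicity $M_g-N_g$ (each contributing $\frac{q+1}{2}$). I must be careful to exclude the Perron eigenvalue $\lambda_1(g+1)=1$, which arises from $i=1$ in the first family since $\frac{\lambda_1(g)+q}{q+1}=1$. Summing the first family over $i=2,\dots,N_g$ gives exactly $(q+1)K_g$. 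Thus I expect a clean recursion of the form
\begin{equation*}
  K_{g+1} = (q+1)K_g + \frac{q+1}{q+2}\bigl((q-1)M_g+N_g\bigr) + \frac{q+1}{2}\bigl(M_g-N_g\bigr),
\end{equation*}
where I have used $M_{g+1}=\frac{(q+1)(q+2)}{2}M_g$ and the known formulas for $M_g$, $N_g$, $W_{g+1}$. The routine but essential step is substituting the closed forms $M_g=\bigl(\frac{(q+1)(q+2)}{2}\bigr)^{g+1}$ and $N_g=\frac{2}{q+3}\bigl(\frac{(q+1)(q+2)}{2}\bigr)^{g+1}+\frac{2(q+2)}{q+3}$ into the inhomogeneous term and simplifying.

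Once the first-order linear recursion is in hand, I would solve it explicitly. The homogeneous part contributes a $(q+1)^g$ term (matching the first summand of~\eqref{KemenyE}), while the two inhomogeneous forcing terms—one proportional to $M_g\sim\bigl(\frac{(q+1)(q+2)}{2}\bigr)^{g}$ and one constant—produce, respectively, the $\bigl(\frac{(q+1)(q+2)}{2}\bigr)^g$ term and the constant $\frac{q+1}{q+3}$. I would verify the base case by computing $K_0$ directly: $\mathcal{G}_q(0)=\mathcal{K}_{q+2}$ is the complete graph, whose normalized adjacency spectrum is known ($\lambda_1=1$ and $\lambda_k=-\frac{1}{q+1}$ with multiplicity $q+1$), so $K_0=(q+1)\cdot\frac{q+1}{q+2}=\frac{(q+1)^2}{q+2}$, which should coincide with~\eqref{KemenyE} at $g=0$. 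The asymptotic statement then follows by noting that as $g\to\infty$ the dominant term is the $\bigl(\frac{(q+1)(q+2)}{2}\bigr)^g$ one, while $N_g$ is asymptotically $\frac{2}{q+3}\bigl(\frac{(q+1)(q+2)}{2}\bigr)^{g+1}$, so the ratio $K_g/N_g$ tends to $\frac{(q+1)(3q+7)/(2(q+3))}{(q+1)\cdot 2/(q+3)}=\frac{3q+7}{2(q+2)}$ after reconciling the index shift between $M_g$ and $N_g$.

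\textbf{The main obstacle} I anticipate is bookkeeping rather than conceptual difficulty: correctly identifying the multiplicities in Lemma~\ref{Spectra} (in particular whether the Perron eigenvector sits in the first family and must be removed, and how the $N_g$ eigenvectors attached to $-\frac{1}{q+1}$ interact with the $(q-1)M_g$ kernel eigenvectors to give the stated multiplicity $(q-1)M_g+N_g$), and then algebraically collapsing the forcing terms so that the $M_g$-proportional and constant pieces separate cleanly. A careful accounting of which spectral contributions carry the factor $(q+1)$ versus those that scale like $M_g$ is what ultimately splits $K_g$ into the three terms of~\eqref{KemenyE}, so I would double-check the recursion against the explicit value of $K_0$ (and ideally $K_1$) before solving it in closed form.
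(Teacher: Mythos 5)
Your proposal is correct and takes essentially the same route as the paper's proof: both combine Lemma~\ref{lemmaKem} with the eigenvalue classes of Lemma~\ref{Spectra} to obtain the recursion $K_{g+1}=(q+1)K_g+\frac{q+1}{q+2}\bigl((q-1)M_g+N_g\bigr)+\frac{q+1}{2}\bigl(M_g-N_g\bigr)$, solve it with the initial value $K_0=\frac{(q+1)^2}{q+2}$, and extract the limit from the dominant $\bigl(\frac{(q+1)(q+2)}{2}\bigr)^g$ term. One bonus of your version: your unsimplified forcing term is the correct one, whereas the paper's simplified line $\frac{3q(q+1)}{2(q+2)}M_g-\frac{q(q+1)}{2(q+2)}$ contains a typo (the last term should carry a factor $N_g$); the paper's final closed form~\eqref{KemenyE} is nonetheless consistent with the correct recursion.
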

\begin{proof}
	Suppose that $1 = \lambda_1(g) > \lambda_2 (g)\geq\ldots \lambda_{N_g}(g) > -1$ are eigenvalues of
	the matrix $P_g$. By Lemmas~\ref{lemmaKem} and~\ref{Spectra}, we obtain
	\begin{align}\label{kr}
		K_{g+1}=&\sum_{i=2}^{N_{g+1}}\frac{1}{1-\lambda_i(g+1)}\notag\\
		=&\sum_{i=2}^{N_g}\frac{1}{1-\frac{\lambda_i(g)+q}{q+1}}+\frac{(q-1)M_g+N_g}{1+\frac{1}{q+1}}+\frac{M_g-N_g}{1-\frac{q-1}{q+1}}\notag\\
		=&(q+1)\sum_{i=2}^{N_g}\frac{1}{1-\lambda_i(g)}+\frac{3q(q+1)}{2(q+2)}M_g-\frac{q(q+1)}{2(q+2)}\notag\\
		=&(q+1)K_g+\frac{3q(q+1)}{2(q+2)}M_g-\frac{q(q+1)}{2(q+2)}
	\end{align}
	With $M_g=\left(\frac{(q+1)(q+2)}{2}\right)^{g+1}$ and the initial condition $K_0=\frac{(q+1)^2}{q+2}$, Eq.~\eqref{kr} is solved 	to obtain
	\begin{align}\label{kg}
		K_g=&\left(\frac{(q+1)^2}{q+2}-\frac{3(q+1)}{2}\right)(q+1)^g\\
		&+\frac{(q+1)(3q+7)}{2(q+3)}\left(\frac{(q+1)(q+2)}{2}\right)^g+\frac{q+1}{q+3}\notag,
	\end{align}
	which is exactly~\eqref{KemenyE}.	
	
	We continue to  express the Kemeny's constant $K_g$ in terms of
	the number of nodes $N_g$. From $N_g=\frac{2}{q+3}\left(\frac{(q+1)(q+2)}{2}\right)^{g+1}+\frac{2(q+2)}{q+3}$, we have $\left(\frac{(q+1)(q+2)}{2}\right)^g=\frac{q+3}{(q+1)(q+2)}N_g-\frac{2}{q+1}$ and $g=\ln\left(\frac{q+3}{(q+1)(q+2)}N_g-\frac{2}{q+1}\right)/\ln\left(\frac{(q+1)(q+2)}{2}\right)$. Inserting these two expressions  into Eq.~\eqref{kg}  results in
	\begin{align*}
		K_g=&\frac{q+1}{q+3}+\left(\frac{(q+1)^2}{q+2}-\frac{3(q+1)}{2}\right)\\
		&\left(\frac{q+3}{(q+1)(q+2)}N_g-\frac{2}{q+1}\right)^{\frac{\ln(q+1)}{\ln\left(\frac{(q+1)(q+2)}{2}\right)}}\\
		&+\frac{(q+1)(3q+7)}{2(q+3)}\left(\frac{q+3}{(q+1)(q+2)}N_g-\frac{2}{q+1}\right).
	\end{align*}
	Therefore, for  $g\to\infty$,
	\begin{equation*}
		\lim_{g\to\infty}K_g =\frac{3q+7}{2(q+2)}N_g.
	\end{equation*}
	This finishes the proof.
\end{proof}
Theorem~\ref{Kemeny} shows that for the whole family of networks   $\mathcal{G}_q(g)$, the Kemeny's constant $K_g$ grows as a linear function of $N_g$, the number of nodes, but the factor $(3q+7)/(2(q+2))$ is a decreasing function of $q$.






\section{Mean Hitting Time}

In this section, we study the mean hitting time for the studied networks with the remarkable scale-free small-world
properties~\cite{WaYiXuZh19}. We will demonstrate that   their mean
hitting time also scales linearly with the number of nodes.

\subsection{Some Definitions}

Here we give  definitions for some quantities related to   network $\mathcal{G}_q(g)$.
\begin{definition}\label{meanhittingtime}
	For network $\mathcal{G}_q(g)$, the mean hitting time is
	\begin{equation}
		\langle{H}_g \rangle= \frac{1}{N_g(N_g-1)}\sum_{i=1}^{N_g}\sum_{j=1}^{N_g}T_{ij}(g).
	\end{equation}
\end{definition}
To obtain the explicit expression of the mean hitting time $\langle{H}_g \rangle$, we first determine three intermediary results for graph $\mathcal{G}_q(g)$, including the sum of hitting times, the additive-degree sum of hitting times, and the multiplicative-degree sum of hitting times.

For network $\mathcal{G}_q(g)$, the sum of hitting times is
\begin{equation}
	H_g=\sum_{i=1}^{N_g}\sum_{j=1}^{N_g}T_{ij}(g);
\end{equation}
the additive-degree sum of hitting times is
\begin{equation}
	H^{+}_g=\sum_{i=1}^{N_g}\sum_{j=1}^{N_g}\left(d_i(g)+d_j(g)\right)T_{ij}(g);
\end{equation}
and the multiplicative-degree sum of hitting times is
\begin{equation}
	H^{\ast}_g=\sum_{i=1}^{N_g}\sum_{j=1}^{N_g}\left(d_i(g)\cdot d_j(g)\right)T_{ij}(g).
\end{equation}

\begin{lemma}\label{conK*}
	For any $g\geqslant0$, the multiplicative-degree hitting time for graph $\mathcal{G}_q(g)$ is
	\begin{small}
		\begin{equation*}
			\begin{aligned}
				H^{\ast}_g=&-\frac{(q+2)(q+4)(q+1)^3}{2}\left(\frac{(q+2)^2(q+1)^3}{4}\right)^g\\
				&+\frac{(q+2)^2(q+1)^3}{q+3}\left(\frac{(q+1)(q+2)}{2}\right)^{2g}\\
				&+\frac{(3q+7)(q+2)^2(q+1)^3}{2(q+3)}\left(\frac{(q+1)(q+2)}{2}\right)^{3g}.
			\end{aligned}
		\end{equation*}
	\end{small}
\end{lemma}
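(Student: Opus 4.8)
The plan is to derive and solve a first-order linear recursion for $H^{\ast}_g$ by partitioning the double sum defining $H^{\ast}_{g+1}$ over $\mathcal{V}_{g+1}\times\mathcal{V}_{g+1}$ according to the node classes of Eq.~\eqref{div} and invoking Theorem~\ref{qTriHT} block by block. By Eq.~\eqref{d_v}, a node of $\mathcal{V}_g$ has degree $(q+1)d_i(g)$ in $\mathcal{G}_q(g+1)$, while a node of $\mathcal{W}_{g+1}$ has degree $q+1$; hence the degree factor $d_i(g+1)d_j(g+1)$ equals $(q+1)^2d_i(g)d_j(g)$ on old--old pairs, $(q+1)^2 d_j(g)$ on new--old and old--new pairs (with $j$ the old node), and $(q+1)^2$ on new--new pairs. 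I would split the ordered pairs into old--old, new--old, old--new, and new--new, the last subdivided by adjacency exactly as in cases 1)--3) of Theorem~\ref{qTriHT}, so that in each block the summand is a degree factor times one of the explicit hitting-time expressions supplied by that theorem. The old--old block contributes $(q+1)^3 H^{\ast}_g$ immediately.

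The key is to reduce the new-node blocks to sums over the old graph. Since each new node corresponds to a unique old edge and each old edge spawns exactly $q$ new nodes, a sum over $i\in\mathcal{W}_{g+1}$ with $\Gamma(i)=\{s,t\}$ equals $q$ times a sum over edges $\{s,t\}\in\mathcal{E}_g$, and a sum over ordered pairs of new nodes from distinct edges equals $q^2$ times a sum over ordered pairs of distinct edges (while adjacent new--new pairs, all sharing one edge, give $q(q-1)$ ordered pairs per edge). The resulting edge sums succumb to three reductions: (i) $\sum_{\{s,t\}\in\mathcal{E}_g}\big(f(s)+f(t)\big)=\sum_{s\in\mathcal{V}_g}d_s(g)f(s)$ for any node function $f$; (ii) Foster's identity (Lemma~\ref{Foster}), $\sum_{\{s,t\}\in\mathcal{E}_g}C_{st}=2M_g(N_g-1)$; and, crucially, (iii) the collapse $\sum_{e_1}\sum_{e_2}\big(T_{su}+T_{tu}+T_{sv}+T_{tv}\big)=\sum_{a,b\in\mathcal{V}_g}d_a(g)d_b(g)T_{ab}(g)=H^{\ast}_g$ over all ordered edge pairs $e_1=\{s,t\}$, $e_2=\{u,v\}$, since $T_{ab}(g)$ is counted once for each of the $d_a(g)d_b(g)$ incident edge pairs. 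Applying (i)--(iii), the terms proportional to $H^{\ast}_g$ from the new--old, old--new and non-adjacent new--new blocks have coefficients $\tfrac{q}{2}(q+1)^3$, $\tfrac{q}{2}(q+1)^3$ and $\tfrac{q^2}{4}(q+1)^3$, which together with the old--old coefficient give a total of $(q+1)^3\big(1+\tfrac{q}{2}+\tfrac{q}{2}+\tfrac{q^2}{4}\big)=\tfrac{(q+1)^3(q+2)^2}{4}$.

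All remaining terms are explicit polynomials in $M_g$ and $N_g$; substituting $M_g=\big(\tfrac{(q+1)(q+2)}{2}\big)^{g+1}$ and $N_g=\tfrac{2}{q+3}\big(\tfrac{(q+1)(q+2)}{2}\big)^{g+1}+\tfrac{2(q+2)}{q+3}$ collapses them to a linear combination of $\big(\tfrac{(q+1)(q+2)}{2}\big)^{2g}$ and $\big(\tfrac{(q+1)(q+2)}{2}\big)^{3g}$. This produces a recursion of the form
\begin{equation*}
H^{\ast}_{g+1}=\frac{(q+1)^3(q+2)^2}{4}\,H^{\ast}_g+R_g,
\end{equation*}
with $R_g$ an explicit combination of those two geometric terms. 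Solving this first-order linear recursion yields a homogeneous term with base $\tfrac{(q+1)^3(q+2)^2}{4}$ and two particular solutions matching the inhomogeneous bases, giving exactly the three-term expression in the statement. The constants are fixed by the initial condition $H^{\ast}_0=(q+2)(q+1)^4$, read off from $\mathcal{G}_q(0)=\mathcal{K}_{q+2}$ (every node of degree $q+1$ and $T_{ij}=q+1$ for $i\neq j$), which one verifies coincides with the claimed closed form evaluated at $g=0$.

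The main obstacle is the bookkeeping of the new--new block. One must sum the non-adjacent formula of case 3(b) over \emph{all} ordered edge pairs using (iii) and then subtract the diagonal $e_1=e_2$ contribution, checking that on the diagonal the hitting-time terms cancel and only $\tfrac{3(q+1)}{2}M_g$ survives per edge, so that the diagonal is correctly reassigned to the adjacent case 3(a) with its distinct $q(q-1)$ multiplicity. Keeping the $q$, $q(q-1)$ and $q^2$ multiplicities mutually consistent, and confirming that the $M_g^2$, $M_g^3$ and $M_g^2N_g$ contributions all reduce to pure powers of $\tfrac{(q+1)(q+2)}{2}$ so that no spurious base enters $R_g$, is where the care lies; once $R_g$ is in this two-base form, solving the recursion is routine.
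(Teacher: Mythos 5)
Your proposal is correct, but it takes a genuinely different---and far more laborious---route than the paper. The paper's proof is essentially two lines: since the stationary distribution is $\pi_i = d_i(g)/(2M_g)$ and Kemeny's constant satisfies $K_g=\sum_j \pi_j T_{ij}(g)$ for \emph{every} starting node $i$, averaging over $i$ with weights $\pi_i$ gives $\sum_{i,j}\pi_i\pi_j T_{ij}(g)=K_g$, i.e.\ the identity $H^{\ast}_g = 4M_g^2 K_g$ valid in any graph at any generation; substituting the closed form of $K_g$ from Theorem~\ref{Kemeny} and $M_g=\left(\tfrac{(q+1)(q+2)}{2}\right)^{g+1}$ immediately yields the three-term expression. (The paper's displayed equation~\eqref{Kf+9} has an index slip, equating $H^{\ast}_{g+1}$ with $4M_g^2K_g$, but this is clearly the intent.) You instead rebuild $H^{\ast}_{g+1}$ from Theorem~\ref{qTriHT} block by block, exactly the machinery the paper reserves for $H^{+}_g$ (Lemma~\ref{con6}) and $H_g$ (Theorem~\ref{mht}), arriving at the recursion $H^{\ast}_{g+1}=\tfrac{(q+1)^3(q+2)^2}{4}H^{\ast}_g+R_g$. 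All of your checkable claims are right: the collapse identities (i)--(iii), the block coefficients $\tfrac{q}{2}(q+1)^3$, $\tfrac{q}{2}(q+1)^3$, $\tfrac{q^2}{4}(q+1)^3$ summing to $\tfrac{(q+1)^3(q+2)^2}{4}$, the diagonal/adjacent bookkeeping in the new--new block, the initial condition $H^{\ast}_0=(q+2)(q+1)^4$, and notably the new-node degree $q+1$ (which the paper's own proof of Theorem~\ref{qTriHT} misstates as $2$, though the formula for $D_{g+1}$ confirms $q+1$). Indeed, combining the paper's identity with its recursion~\eqref{kr} for $K_g$ shows your recursion holds with $R_g=\tfrac{3q(q+1)^3(q+2)}{2}M_g^3-\tfrac{q(q+1)^3(q+2)}{2}M_g^2$, consistent with your claim that $R_g$ involves only the bases $\left(\tfrac{(q+1)(q+2)}{2}\right)^{2g}$ and $\left(\tfrac{(q+1)(q+2)}{2}\right)^{3g}$, both distinct from the homogeneous base. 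What each approach buys: the paper's argument is nearly free once Theorem~\ref{Kemeny} is proved and exposes $H^{\ast}_g$ as Kemeny's constant in disguise; yours is self-contained (it never invokes Lemma~\ref{lemmaKem} or Theorem~\ref{Kemeny}) and reuses machinery needed anyway for the other two sums, but at the price of the heavy explicit computation of $R_g$, which your sketch outlines rather than carries out.
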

\begin{proof}
	By definition of the multiplicative-degree sum of hitting times, we have
	\begin{align}\label{Kf+9}
		H^{\ast}_{g+1}	=&\sum_{\{i,j\}\subseteq \mathcal{V}_g\cup \mathcal{W}_{g+1}}\left(d_i(g+1)d_j(g+1)\right)C_{ij}(g+1)\notag\\
		=&4M_g^2K_g.
	\end{align}
	Using Theorem~\ref{Kemeny},  the  result is obtained.
\end{proof}
In what follows, we will determine the other two invariants $H^{+}_g$ and $H_g$ for  network $\mathcal{G}_q(g)$.

\subsection{Some Intermediary Results}

Let $C_{ij}(g)$ be the commute time for any pair of nodes $i$ and $j$ in graph $\mathcal{G}_q(g)$. For any two subsets $X$ and $Y$ of set $\mathcal{V}_g$ of nodes in graph $\mathcal{G}_q(g)$,  define
\begin{equation*}
	C_{X,Y}(g)=\sum_{i\in X, j\in Y}C_{ij}(g).
\end{equation*}
\begin{lemma}\label{xysum}
	For $g\geq0$ and $Y\subseteq\mathcal{V}_g$,
	\begin{equation}
		\sum_{i\in\mathcal{W}_{g+1}}C_{\Gamma(i),Y}(g+1)=\sum_{x\in\mathcal{V}_g}qd_x(g)C_{x,Y}(g+1).
	\end{equation}
\end{lemma}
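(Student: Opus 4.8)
The plan is to prove the identity by unpacking the definition of $C_{X,Y}$, swapping the order of summation, and then reducing everything to a single combinatorial multiplicity count dictated by the network construction. Throughout I read $C_{x,Y}(g+1)$ as $C_{\{x\},Y}(g+1)=\sum_{y\in Y}C_{xy}(g+1)$.

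First I would expand the left-hand side using $C_{X,Y}(g+1)=\sum_{x\in X,\,y\in Y}C_{xy}(g+1)$ with $X=\Gamma(i)$, giving
\[
\sum_{i\in\mathcal{W}_{g+1}}C_{\Gamma(i),Y}(g+1)=\sum_{i\in\mathcal{W}_{g+1}}\sum_{x\in\Gamma(i)}\sum_{y\in Y}C_{xy}(g+1).
\]
Interchanging the order of summation, each commute time $C_{xy}(g+1)$ (for a fixed $x\in\mathcal{V}_g$ and $y\in Y$) is counted once for every new node $i\in\mathcal{W}_{g+1}$ whose old-neighbor pair $\Gamma(i)$ contains $x$. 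Thus the whole problem collapses to determining the multiplicity $m_x:=|\{i\in\mathcal{W}_{g+1}:x\in\Gamma(i)\}|$.

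The heart of the argument is evaluating $m_x$. By the construction recalled around Eq.~\eqref{div}, every node of $\mathcal{W}_{g+1}$ is generated from a unique old edge $uv\in\mathcal{E}_g$ and has $\Gamma(i)=\{u,v\}$; moreover, for each old edge there is exactly one new node in each of the $q$ subsets $\mathcal{V}^{(1)}_{g+1},\ldots,\mathcal{V}^{(q)}_{g+1}$. Hence $x\in\Gamma(i)$ holds precisely when the edge generating $i$ is incident to $x$. The number of such edges is the degree $d_x(g)$, and each contributes $q$ new nodes, so $m_x=q\,d_x(g)$. Substituting back yields
\[
\sum_{x\in\mathcal{V}_g}m_x\sum_{y\in Y}C_{xy}(g+1)=\sum_{x\in\mathcal{V}_g}q\,d_x(g)\,C_{x,Y}(g+1),
\]
which is exactly the right-hand side.

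The only delicate step—and the one I would state most carefully—is the multiplicity count $m_x=q\,d_x(g)$, since it rests on correctly reading off from the construction that each old edge spawns exactly $q$ new nodes (one per copy $\mathcal{V}^{(k)}_{g+1}$) and that $\Gamma(i)$ is always the endpoint pair of the generating edge. Everything else is a routine interchange of finite sums, so no analytic obstacle arises; the identity is purely combinatorial.
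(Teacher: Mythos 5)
Your proof is correct and follows essentially the same approach as the paper's: both reduce the identity to counting, for each old node $x\in\mathcal{V}_g$, how many new nodes $i\in\mathcal{W}_{g+1}$ satisfy $x\in\Gamma(i)$, concluding that this multiplicity is $qd_x(g)$ (you count it as $q$ new nodes per incident edge, the paper as the degree increment $d_x(g+1)-d_x(g)$, which is the same number). Your write-up is simply a more detailed version of the paper's one-line argument.
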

\begin{proof}
	For any node $x\in\mathcal{V}_g$, there are $d_x(g+1)-d_x(g)=qd_x(g)$ new nodes in $\mathcal{W}_{g+1}$ that are adjacent to $i$. So $C_{x,Y}(g+1)$ is summed $qd_x(g)$ times.
\end{proof}
\begin{lemma}\label{afore1}
	For any $g\geqslant0$,
	\begin{small}
		\begin{equation*}
			\begin{aligned}
				\sum_{i\in \mathcal{W}_{g+1}}\sum_{j\in \mathcal{V}_g}C_{ij}(g+1)
				=&\frac{q(q+1)}{4}H^{+}_q(g)+\frac{q(q+1)}{2}M_g\\
				&\left(3M_gN_g-N_g^2+N_g\right).
			\end{aligned}
		\end{equation*}
	\end{small}
\end{lemma}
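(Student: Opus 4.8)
The plan is to reduce everything to commute times and then invoke the two structural identities already available: Lemma~\ref{xysum} and Foster's theorem (Lemma~\ref{Foster}). First I would form the commute time $C_{ij}(g+1)$ between a new node $i\in\mathcal{W}_{g+1}$ with $\Gamma(i)=\{s,t\}$ and an old node $j\in\mathcal{V}_g$ by adding the two expressions in part~2 of Theorem~\ref{qTriHT}. The isolated constants $+\tfrac{q+1}{2}$ and $-\tfrac{q+1}{2}$ cancel, and the surviving hitting-time combinations regroup into commute times via $T_{sj}(g)+T_{js}(g)=C_{sj}(g)$, $T_{tj}(g)+T_{jt}(g)=C_{tj}(g)$, and $T_{st}(g)+T_{ts}(g)=C_{st}(g)$. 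Using part~1 of Theorem~\ref{qTriHT}, which gives $C_{xy}(g+1)=(q+1)C_{xy}(g)$ for $x,y\in\mathcal{V}_g$, I can rewrite this purely at level $g+1$ as
\begin{equation*}
  C_{ij}(g+1)=\frac{3(q+1)}{2}M_g+\frac12\bigl(C_{sj}(g+1)+C_{tj}(g+1)\bigr)-\frac14\,C_{st}(g+1),
\end{equation*}
where $\Gamma(i)=\{s,t\}$.

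Next I would sum this identity over all $i\in\mathcal{W}_{g+1}$ and $j\in\mathcal{V}_g$, splitting it into three pieces. The constant piece contributes $\tfrac{3(q+1)}{2}M_g\cdot|\mathcal{W}_{g+1}|\cdot|\mathcal{V}_g|$; since $|\mathcal{W}_{g+1}|=W_{g+1}=qM_g$ and $|\mathcal{V}_g|=N_g$, this produces the $M_g^2N_g$ term. The middle piece is exactly $\tfrac12\sum_{i\in\mathcal{W}_{g+1}}C_{\Gamma(i),\mathcal{V}_g}(g+1)$, which is the precise input of Lemma~\ref{xysum} with $Y=\mathcal{V}_g$; applying it converts the sum into $\tfrac12\sum_{x\in\mathcal{V}_g}qd_x(g)\,C_{x,\mathcal{V}_g}(g+1)$. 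Converting back to level $g$ and using the degree identity $\sum_{x,j}d_x(g)C_{xj}(g)=\sum_{i,j}(d_i(g)+d_j(g))T_{ij}(g)=H^{+}_g$ collapses this piece into a multiple of $H^{+}_g$.

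For the last piece, $C_{st}(g+1)$ is independent of $j$, so summing over $j\in\mathcal{V}_g$ produces a factor $N_g$, and summing over $i\in\mathcal{W}_{g+1}$ replaces the sum over new nodes by $q$ times the sum over edges $\{s,t\}\in\mathcal{E}_g$, since each old edge spawns exactly $q$ new nodes. Foster's theorem (Lemma~\ref{Foster}) then evaluates $\sum_{\{s,t\}\in\mathcal{E}_g}C_{st}(g)=2M_g(N_g-1)$, yielding the $M_gN_g(N_g-1)=M_g(N_g^2-N_g)$ contribution. Collecting the three contributions and simplifying yields the stated expression. The main obstacle is the combinatorial bookkeeping in the middle piece: one must track both the multiplicity $q$ of new nodes per old edge and the multiplicity $d_x(g)$ with which each node $x$ enters the edge sum, so that Lemma~\ref{xysum} together with the degree identity recognizes the degree-weighted commute sum as $H^{+}_g$; keeping the level-$g$ versus level-$(g+1)$ conversions consistent throughout is the other place where care is needed.
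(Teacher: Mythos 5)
Your plan is structurally the same as the paper's own proof: add the two formulas in Theorem~\ref{qTriHT}(2) so the constants cancel, obtaining
\begin{equation*}
C_{ij}(g+1)=\frac{3(q+1)}{2}M_g+\frac{q+1}{2}\bigl(C_{sj}(g)+C_{tj}(g)\bigr)-\frac{q+1}{4}C_{st}(g),\qquad \Gamma(i)=\{s,t\},
\end{equation*}
then sum over $i\in\mathcal{W}_{g+1}$, $j\in\mathcal{V}_g$ in three pieces, handling the middle piece by edge--degree counting (Lemma~\ref{xysum}) and the last one by Foster's theorem. Your constant piece and your Foster piece do match the statement. The gap is your final assertion that collecting the pieces ``yields the stated expression'': it does not. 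Carried out exactly as you describe (multiplicity $q$ from Lemma~\ref{xysum}, the factor $q+1$ from Theorem~\ref{qTriHT}(1), and your degree identity, which is correct for the paper's ordered-pair, hitting-time definition of $H^{+}_g$), the middle piece equals
\begin{align*}
\frac12\sum_{i\in\mathcal{W}_{g+1}}C_{\Gamma(i),\mathcal{V}_g}(g+1)
&=\frac{q}{2}\sum_{x,j\in\mathcal{V}_g}d_x(g)\,C_{xj}(g+1)\\
&=\frac{q(q+1)}{2}\sum_{x,j\in\mathcal{V}_g}d_x(g)\,C_{xj}(g)=\frac{q(q+1)}{2}H^{+}_g,
\end{align*}
which is \emph{twice} the coefficient $\frac{q(q+1)}{4}H^{+}_g$ in the statement. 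So your argument, done right, produces $\frac{q(q+1)}{2}H^{+}_g+\frac{q(q+1)}{2}M_g\left(3M_gN_g-N_g^2+N_g\right)$ and cannot close as written; the factor of two is glossed over at precisely the spot you flagged as needing care.

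The factor of two is not your mistake but the lemma's. Check $q=1$, $g=0$: $\mathcal{G}_1(1)$ has three old and three new nodes, and each new node has commute times $11$, $11$, $15$ to the three old nodes, so the left-hand side is $3(11+11+15)=111$. With $H^{+}_0=2(q+2)(q+1)^3=48$ and $M_0=N_0=3$, your coefficient gives $48+63=111$, whereas the stated formula gives $24+63=87$. The paper's own computation, Eq.~\eqref{mr2}, contains two slips: it drops the multiplicity $q$ (each $x\in\mathcal{V}_g$ is the old neighbor of $qd_x(g)$ new nodes, not $d_x(g)$), and it equates $\sum_{i,j}\left(d_i(g)+d_j(g)\right)C_{ij}(g)$ with $H^{+}_g$ when this sum equals $2H^{+}_g$; indeed its conclusion $\frac{q+1}{4}H^{+}_g$ matches neither the correct value nor the lemma it is meant to establish. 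A sound write-up of your approach would therefore prove the corrected identity with coefficient $\frac{q(q+1)}{2}$ and flag that the statement (and its downstream uses in Lemma~\ref{con6} and Theorem~\ref{mht}) needs this repair, rather than claim agreement with the formula as stated.
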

\begin{proof}
	By Theorem~\ref{qTriHT}, one obtains
	\begin{align}\label{mr1}
		&\sum_{i\in \mathcal{W}_{g+1}}\sum_{j\in \mathcal{V}_g}C_{ij}(g+1)\notag\\
		=&\sum_{i\in \mathcal{W}_{g+1}}\sum_{j\in \mathcal{V}_g}\left(\frac{3(q+1)}{2}M_g\right.
		\notag\\&\left.\quad\quad\quad\quad\quad\quad+\frac{q+1}{4}\left(2\left(C_{sj}(g)+C_{tj}{g}\right)-C_{st}(g)\right)\right)\notag\\
		=&\frac{3q(q+1)}{2}M_g^2N_g+\frac{q+1}{2}\sum_{i\in \mathcal{W}_{g+1}}\sum_{j\in \mathcal{V}_g}(C_{sj}(g)+C_{tj}(g))\notag\\
		&-\frac{q+1}{4}\sum_{i\in \mathcal{W}_{g+1}}\sum_{j\in \mathcal{V}_g}C_{st}(g).
	\end{align}
	For the second term on the right-hand side of the second equal sign in Eq.~\eqref{mr1}, we have
	\begin{align}\label{mr2}
		&\frac{q+1}{2}\sum_{i\in \mathcal{W}_{g+1}}\sum_{j\in \mathcal{V}_g}(C_{sj}(g)+C_{tj}(g))\notag\\
		=&\frac{q+1}{2}\sum_{i,j\in \mathcal{V}_g}d_i(g)C_{ij}(g)\notag\\
		=&\frac{q+1}{2}\frac{\sum_{i,j\in \mathcal{V}_g}d_i(g)C_{ij}(g)+\sum_{i,j\in \mathcal{V}_g}d_j(g)C_{ij}(g)}{2}\notag\\
		=&\frac{q+1}{4}\sum_{i\in \mathcal{W}_{g+1}}\sum_{j\in \mathcal{V}_g}(d_i(g)+d_j(g))C_{ij}(g)\notag\\
		=&\frac{q+1}{4}H_g^{+}.
	\end{align}
	With respect to the third term in  Eq.~\eqref{mr1},  using Lemma~\ref{Foster}, it can be rewritten as
	\begin{align}\label{mr3}
		\frac{q+1}{4}\sum_{i\in \mathcal{W}_{g+1}}\sum_{j\in \mathcal{V}_g}C_{st}(g)	=&\frac{q(q+1)}{4}N_g\sum_{st\in\mathcal{V}_g}C_{st}(g)\notag\\
		=&\frac{q(q+1)}{2}M_gN_g(N_g-1).
	\end{align}
By plugging Eqs.~\eqref{mr2} and~\eqref{mr3} into Eq.~\eqref{mr1}, we  obtain the desired result.
\end{proof}
\begin{lemma}\label{afore2}
	For any $g\geqslant0$,
	\begin{align*}
		\sum_{i,j\in\mathcal{W}_{g+1}}C_{ij}(g+1)
		=&\frac{q^2(q+1)}{2}H^{\ast}_g+q(q+1)M_g^2\\
		&(3qM_g-qN_g-2).
	\end{align*}
\end{lemma}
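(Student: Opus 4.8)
The plan is to split the ordered double sum $\sum_{i,j\in\mathcal{W}_{g+1}}C_{ij}(g+1)$ according to the two geometric relations a pair of new nodes can have, which are exactly the two cases distinguished in Theorem~\ref{qTriHT}(3). Since the $q$ new nodes created on a single old edge form a clique and are each joined to both endpoints of that edge, two new nodes are \emph{adjacent} precisely when they are distinct copies attached to the same edge $e=\{s,t\}\in\mathcal{E}_g$, and \emph{non-adjacent} precisely when they sit on different edges. First I would record the commute times. For an adjacent pair Theorem~\ref{qTriHT}(3a) gives $T_{ij}(g+1)=T_{ji}(g+1)=(q+1)M_g$, hence $C_{ij}(g+1)=2(q+1)M_g$. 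For a non-adjacent pair with $\Gamma(i)=\{s,t\}$ and $\Gamma(j)=\{u,v\}$, Theorem~\ref{qTriHT}(3b) gives $T_{ji}(g+1)$; applying the same formula with the two edges interchanged yields $T_{ij}(g+1)$, and adding the two produces the symmetric expression
\begin{equation*}
C_{ij}(g+1)=3(q+1)M_g+\frac{q+1}{4}\bigl(C_{su}(g)+C_{tu}(g)+C_{sv}(g)+C_{tv}(g)-C_{st}(g)-C_{uv}(g)\bigr).
\end{equation*}

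Next I would perform the counting. There are $q(q-1)M_g$ ordered adjacent pairs (each of the $M_g$ edges carries $q$ copies), contributing $2q(q-1)(q+1)M_g^2$. For the non-adjacent pairs, each ordered pair of distinct edges $(e,e')$ accounts for $q^2$ ordered copy pairs, and there are $M_g(M_g-1)$ such edge pairs, so the non-adjacent contribution is
\begin{equation*}
q^2\sum_{\substack{e,e'\in\mathcal{E}_g\\ e\neq e'}}\Bigl(3(q+1)M_g+\tfrac{q+1}{4}\bigl(C_{su}+C_{tu}+C_{sv}+C_{tv}-C_{st}-C_{uv}\bigr)\Bigr),
\end{equation*}
with $e=\{s,t\}$ and $e'=\{u,v\}$. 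The constant term sums to $3q^2(q+1)M_g^2(M_g-1)$.

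The crux is to evaluate the three edge-pair sums, and this is where I expect the bookkeeping to be delicate. The mixed term $\sum_{e\neq e'}\sum_{a\in e,\,b\in e'}C_{ab}(g)$ I would compute by first extending the sum to all ordered edge pairs: since each $C_{ab}(g)$ is counted once for every edge through $a$ and every edge through $b$, the extended sum equals $\sum_{a,b\in\mathcal{V}_g}d_a(g)d_b(g)C_{ab}(g)=2H^{\ast}_g$, and subtracting the diagonal $e=e'$ part (which by Foster's theorem, Lemma~\ref{Foster}, equals $2\sum_{st\in\mathcal{E}_g}C_{st}(g)=4M_g(N_g-1)$) leaves $2H^{\ast}_g-4M_g(N_g-1)$. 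The two remaining terms $\sum_{e\neq e'}C_{st}(g)$ and $\sum_{e\neq e'}C_{uv}(g)$ are each $(M_g-1)\sum_{\text{edges}}C=2M_g(M_g-1)(N_g-1)$ by Lemma~\ref{Foster}. Combining, the bracketed edge-sum reduces to $2H^{\ast}_g-4M_g^2(N_g-1)$, whence the non-adjacent contribution becomes $\tfrac{q^2(q+1)}{2}H^{\ast}_g+3q^2(q+1)M_g^2(M_g-1)-q^2(q+1)M_g^2(N_g-1)$. Adding the adjacent contribution $2q(q-1)(q+1)M_g^2$ and collecting the coefficients of $M_g^3$, $M_g^2N_g$, and $M_g^2$ should reproduce exactly $\frac{q^2(q+1)}{2}H^{\ast}_g+q(q+1)M_g^2(3qM_g-qN_g-2)$. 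The main obstacle is purely organizational: correctly accounting for the $d_a d_b$ incidence multiplicities and the diagonal corrections so that the three applications of Foster's theorem land on the right coefficients.
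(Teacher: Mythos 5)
Your proposal is correct and follows essentially the same route as the paper: the same split into same-edge (adjacent) and different-edge (non-adjacent) pairs of new nodes via Theorem~\ref{qTriHT}(3), the same completion-of-the-sum trick that converts the mixed commute-time term into $\tfrac{q^2(q+1)}{2}H^{\ast}_g$ through the degree multiplicities $d_a(g)d_b(g)$, and the same applications of Foster's theorem (Lemma~\ref{Foster}) to the diagonal and single-edge correction terms. The only difference is bookkeeping — you organize the sums over ordered pairs of distinct old edges with multiplicity $q^2$, whereas the paper sums over pairs of new nodes and subtracts the same-edge contribution — and both reduce to the identical collection of terms.
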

\begin{proof}
	Suppose that $\Gamma(i)=\{s,t\}$ and $\Gamma(j)=\{u,v\}$. Note that for any two different nodes $i$ and $j$ in $\mathcal{W}_{g+1}$, if their old  neighbors in $\mathcal{V}_{g}$  are the same, i.e., $\Gamma(i)=\Gamma(j)=\{s,t\}$, we use $i\sim j$ to denote this relation. Otherwise, if the sets of the old neighbors for  $i$ and $j$ are different, we call $i\nsim j$. Then, by Theorem~\ref{qTriHT}, we obtain
	\begin{align}\label{A2_1}
		&   \sum_{i,j\in\mathcal{W}_{g+1}} C_{ij}(g+1)\notag\\
		=&\sum_{i,j\in\mathcal{W}_{g+1}\atop i\neq j,i\sim j}C_{ij}(g+1)+\sum_{i,j\in\mathcal{W}_{g+1}\atop i\nsim j}C_{ij}(g+1)\notag\\
		=&\sum_{i,j\in\mathcal{W}_{g+1}\atop i\nsim j}
		\Bigg(3(q+1)M_g+\frac{q+1}{4}\Big(C_{su}(g)+C_{tu}(g)\notag\\
		&\quad\quad\quad\quad~~+C_{sv}(g)+C_{tv}(g)-\left(C_{uv}(g)+C_{st}(g)\right)\Big)\Bigg)\notag\\
		&+\sum_{i,j\in\mathcal{W}_{g+1}\atop i\neq j, i\sim j}2(q+1)M_{g}\notag\\
		=&3q^2(q+1)M_g^2(M_g-1)+2q(q-1)(q+1)M_g^2\notag\\ &+\frac{q+1}{4}\sum_{i,j\in\mathcal{W}_{g+1}}\left(C_{su}(g)+C_{tu}(g)+C_{sv}(g)+C_{tv}(g)\right)\notag\\
		&-\frac{q+1}{2}\sum_{i,j\in\mathcal{W}_{g+1}\atop i\sim j}C_{st}(g)\notag\\
		&-\frac{q+1}{4}\sum_{i,j\in \mathcal{W}_{g+1}\atop i \nsim j}\left(C_{st}(g)+C_{uv}(g)\right).
	\end{align}
	Below we evaluate the three sum terms on the right-hand side of the second equal sign in Eq.~\eqref{A2_1}. By Lemma~\ref{xysum} and Theorem~\ref{qTriHT},  the first sum term can be computed as
	\begin{small}
		\begin{align}\label{A2_2}
			&  \frac{q+1}{4}\sum_{i,j\in \mathcal{W}_{g+1}}\left(C_{su}(g)+C_{tu}(g)+C_{sv}(g)+C_{tv}(g)\right)\notag\\
			=&\frac{q+1}{4}\sum_{i,j\in\mathcal{W}_{g+1}}C_{\Gamma(i),\Gamma(j)}(g)\notag\\
			=&\frac{q+1}{4}\sum_{x,y\in\mathcal{V}_g}q^2d_x(g)d_y(g)C_{xy}(g)\notag\\
			=&\frac{q^2(q+1)}{2}H^{\ast}_g.
		\end{align}
	\end{small}
	We next compute the second sum term in Eq.~\eqref{A2_1}. By Lemma~\ref{Foster}, we have
	\begin{align}\label{A2_a}
		\frac{q+1}{2}\sum_{i,j\in\mathcal{W}_{g+1}\atop i\sim j}C_{st}(g)=&\frac{q^2(q+1)}{2}\sum_{st\in\mathcal{E}_g}C_{st}(g)\notag\\
		=&q^2(q+1)M_g(N_g-1).
	\end{align}
	We proceed to evaluate the third term in Eq.~\eqref{A2_1}.  According to Eq.~\eqref{div}, it follows that
	\begin{small}
		\begin{align}\label{A2_3}
			&\frac{q+1}{4}\sum_{i,j\in \mathcal{W}_{g+1}\atop i\nsim j}\left(C_{st}(g)+C_{uv}(g)\right)\notag\\=&  \frac{q+1}{4}\sum_{f=1}^{q}\sum_{i\in \mathcal{V}^{(f)}}
			\sum_{i\nsim j} \left(C_{st}(g)+C_{uv}(g)\right)
			\notag\\
			=&  \frac{q+1}{4}q\sum_{st\in \mathcal{E}_g}
			q\sum_{uv\in \mathcal{E}_g \atop uv\neq st}\big(C_{st}(g)+C_{uv}(g)\big)
			\notag\\
			=&  \frac{q^2(q+1)}{4}\sum_{st\in \mathcal{E}_g}
			\sum_{uv\in \mathcal{E}_g\atop uv\neq st}\left(C_{uv}(g)+C_{st}(g)\right)
			.
		\end{align}
	\end{small}
	By Lemma~\ref{Foster}, Eq.~\eqref{A2_3} can be recast as
	\begin{small}
		\begin{align}\label{A2_4}
			&\frac{q+1}{4}\sum_{i,j\in \mathcal{W}_{g+1}}\left(C_{st}(g)+C_{uv}(g)\right)\notag\\
			=&  \frac{q^2(q+1)}{2}(M_g-1)\sum_{st\in \mathcal{E}_g}C_{st}(g)\notag\\
			=&q^2(q+1)M_g(M_g-1)(N_g-1).
		\end{align}
	\end{small}
	Plugging Eqs.~\eqref{A2_2},~\eqref{A2_a}, and~\eqref{A2_4} into Eq.~\eqref{A2_1} gives the result.
\end{proof}

\subsection{Addictive-Degree Sum of Hitting Times}

We now determine the additive-degree hitting time for graph $\mathcal{G}_q(g)$.
\begin{lemma}\label{con6}
	For any $g\geqslant0$, the additive-degree hitting time for graph $\mathcal{G}_q(g)$ is
	\begin{small}
		\begin{equation*}
			\begin{aligned}
				H^{+}_{g}=&\frac{2(q+2)^2(q+1)^3}{(q+3)^2}\left(\frac{(q+1)(q+2)}{2}\right)^{2g}\\	&+\frac{(q\!\!+\!\!2)(3q\!\!+\!\!7)(q\!\!+\!\!1)^3(q^3\!\!+\!\!8q^2\!\!+\!\!22q\!\!+\!\!20)}{(q\!\!+\!\!3)^2(q^2\!\!+\!\!5q\!\!+\!\!8)}\!\left(\!\frac{(q\!+\!1)(q\!+\!2)}{2}\!\right)^{3g}\\
				&-\frac{(q+2)(q+4)(q+1)^3}{(q+3)}\left(\frac{(q+2)^2(q+1)^3}{4}\right)^g\\
				&+\frac{(q+2)(q^2+9q+20)(q+1)^3}{(q+3)(q^2+5q+8)}\left(\frac{(q+2)(q+1)^2}{2}\right)^g\\
				&+\frac{(q+2)(q+1)^3}{(q+3)^2}\left(\frac{(q+1)(q+2)}{2}\right)^g.
			\end{aligned}
		\end{equation*}
	\end{small}
\end{lemma}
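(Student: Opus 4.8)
The plan is to set up and solve a first-order linear recursion relating $H^+_{g+1}$ to $H^+_g$ and to the already-determined $H^*_g$ of Lemma~\ref{conK*}. The starting point is the symmetrization $H^+_{g+1}=\tfrac12\sum_{i,j}(d_i(g+1)+d_j(g+1))C_{ij}(g+1)$, which holds because relabelling $i\leftrightarrow j$ turns $\sum(d_i+d_j)T_{ij}$ into $\sum(d_i+d_j)T_{ji}$; this lets me invoke Theorem~\ref{qTriHT}, stated through commute times, block by block. I would split the pairs of $\mathcal{V}_{g+1}=\mathcal{V}_g\cup\mathcal{W}_{g+1}$ into three classes, both old, one old and one new, and both new, and evaluate each class's contribution to $H^+_{g+1}$ separately. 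I will also use the elementary identities $\sum_{i,j}d_i(g)C_{ij}(g)=H^+_g$ and $\sum_{i,j}d_i(g)d_j(g)C_{ij}(g)=2H^*_g$, again by symmetry of $C$.

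For the both-old class, part~1 of Theorem~\ref{qTriHT} gives $C_{ij}(g+1)=(q+1)C_{ij}(g)$, and together with $d_i(g+1)=(q+1)d_i(g)$ this scales the summand by $(q+1)^2$, producing a clean $(q+1)^2H^+_g$. For the both-new class every new node has degree $2$, so the factor $d_i(g+1)+d_j(g+1)$ is the constant $4$ and the block reduces to $2\sum_{i,j\in\mathcal{W}_{g+1}}C_{ij}(g+1)$, i.e. twice the quantity of Lemma~\ref{afore2}; substituting that lemma injects a multiple of $H^*_g$ plus explicit polynomials in $M_g$ and $N_g$.

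The old--new class is the crux. Writing $\Gamma(i)=\{s,t\}$ for the new node $i$ and using $d_i(g+1)=2$, $d_j(g+1)=(q+1)d_j(g)$, the prefactor $2+(q+1)d_j(g)$ separates the sum into an unweighted part, which is exactly the commute sum of Lemma~\ref{afore1} and returns $H^+_g$, and a $d_j(g)$-weighted part $\sum_{i\in\mathcal{W}_{g+1}}\sum_{j\in\mathcal{V}_g}d_j(g)C_{ij}(g+1)$. To evaluate the latter I would insert $C_{ij}(g+1)=\tfrac{3(q+1)}{2}M_g+\tfrac{q+1}{2}(C_{sj}(g)+C_{tj}(g))-\tfrac{q+1}{4}C_{st}(g)$ coming from Theorem~\ref{qTriHT}(2), and then apply the counting principle behind Lemma~\ref{xysum}, namely that each old node $x$ is the neighbour of exactly $qd_x(g)$ new nodes, to collapse $\sum_{i,j}d_j(g)(C_{sj}(g)+C_{tj}(g))$ into $q\sum_{x,y\in\mathcal{V}_g}d_x(g)d_y(g)C_{xy}(g)=2qH^*_g$; the residual $C_{st}(g)$ terms are summed over edges by Foster's identity (Lemma~\ref{Foster}).

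Collecting the three classes yields a recursion $H^+_{g+1}=\alpha H^+_g+\beta H^*_g+c(g)$, where the coefficient $\alpha$ is fed by both the old--old block $(q+1)^2$ and the old--new block, $\beta$ is an explicit function of $q$, and $c(g)$ is a fixed combination of $M_g^3$, $M_g^2N_g$, $M_gN_g^2$ and lower-degree monomials. I would then substitute the closed forms of $H^*_g$ (Lemma~\ref{conK*}), $M_g$ and $N_g$, so that $\beta H^*_g+c(g)$ becomes a linear combination of the four geometric sequences $\big(\tfrac{(q+1)(q+2)}{2}\big)^{g}$, its square and cube, and $\big(\tfrac{(q+1)^3(q+2)^2}{4}\big)^{g}$. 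The inhomogeneous recursion then solves in closed form as one particular term per base together with the homogeneous term $\propto\alpha^{g}$, five terms in all, matching the shape of the statement; the constants are finally pinned down against the initial value $H^+_0=2(q+1)^3(q+2)$, obtained directly on $\mathcal{G}_q(0)=\mathcal{K}_{q+2}$, where all degrees equal $q+1$ and all pairwise hitting times equal $q+1$. I expect the main obstacle to be the bookkeeping in the old--new class: one must keep the degree-weighted reduction to $H^*_g$ cleanly separated from the purely explicit $M_g,N_g$ contributions, and then track every distinct geometric base so that all five coefficients of the closed form, and the initial condition, are reproduced exactly.
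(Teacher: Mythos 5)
Your plan follows exactly the route of the paper's own proof: symmetrize $H^{+}$ into commute times, split the pairs of $\mathcal{V}_{g+1}=\mathcal{V}_g\cup\mathcal{W}_{g+1}$ into old--old, old--new and new--new blocks, evaluate them via Theorem~\ref{qTriHT}, Lemmas~\ref{afore1}, \ref{afore2}, \ref{xysum} and Foster's identity (Lemma~\ref{Foster}), inject the closed form of $H^{\ast}_g$ from Lemma~\ref{conK*}, and solve the resulting first-order linear recursion against the initial value $H^{+}_0=2(q+2)(q+1)^3$, which you compute correctly. However, the proposal contains one error that is fatal for every $q>1$: a node created at iteration $g+1$ has degree $q+1$ in $\mathcal{G}_q(g+1)$, not $2$. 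By construction each new node is adjacent to the two endpoints of its parent edge \emph{and} to the $q-1$ other vertices of the copy of $\mathcal{K}_q$ planted on that edge; equivalently, Eq.~\eqref{d_v} gives $d^{(g+1)}_v=(q+1)^{(g+1)-(g+1)+1}=q+1$ for $v\in\mathcal{W}_{g+1}$, and the degree matrix satisfies $D_{g+1}={\rm diag}\{(q+1)D_g,(q+1)I,\ldots,(q+1)I\}$. (You may have picked up the slip ``$d_i(g+1)=2$ for $i\in\mathcal{W}_{g+1}$'' stated in the preamble of the paper's proof of Theorem~\ref{qTriHT}; that is a typo there, and the actual computations, both in that proof and in the paper's proof of this lemma, use $q+1$.)

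The consequence is that your prefactors are wrong in two of the three blocks. In the new--new block the weight is $d_i(g+1)+d_j(g+1)=2(q+1)$, not $4$, so that block contributes $(q+1)\sum_{i,j\in\mathcal{W}_{g+1}}C_{ij}(g+1)$ rather than $2\sum_{i,j\in\mathcal{W}_{g+1}}C_{ij}(g+1)$; in the old--new block the weight is $(q+1)\bigl(1+d_j(g)\bigr)$, not $2+(q+1)d_j(g)$, so the unweighted commute sum of Lemma~\ref{afore1} (the part carrying the $H^{+}_g$ contribution) enters with coefficient $q+1$ rather than $2$. Hence the coefficients $\alpha$, $\beta$ and $c(g)$ of your recursion $H^{+}_{g+1}=\alpha H^{+}_g+\beta H^{\ast}_g+c(g)$ are all wrong for $q\geq 2$. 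One can see this directly from the shape of the claimed answer: none of the geometric bases produced by $\beta H^{\ast}_g+c(g)$, namely $\tfrac{(q+1)(q+2)}{2}$, its square and cube, and $\tfrac{(q+1)^3(q+2)^2}{4}$, equals $\tfrac{(q+2)(q+1)^2}{2}$, so the term proportional to $\bigl(\tfrac{(q+2)(q+1)^2}{2}\bigr)^{g}$ in Lemma~\ref{con6} can only come from the homogeneous solution, which forces $\alpha=\tfrac{(q+2)(q+1)^2}{2}$; your weights shrink the old--new contribution to $\alpha$ by the factor $\tfrac{2}{q+1}$, and the two values of $\alpha$ coincide only when $q+1=2$. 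So as written your derivation reproduces the lemma only in the case $q=1$; with the single correction $d_i(g+1)=q+1$ for $i\in\mathcal{W}_{g+1}$, your outline becomes precisely the paper's proof.
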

\begin{proof}
	By definition of the additive-degree sum of hitting times, we have
	\begin{small}
		\begin{align}\label{Kf+1}
			H^{+}_{g+1}	=&  \sum_{i,j\in\mathcal{V}_g\cup \mathcal{W}_{g+1}}\left(d_i(g+1)+d_j(g+1)\right)C_{ij}(g+1)\notag\\
			=&\frac{1}{2}\sum_{i,j\in\mathcal{V}_g} \left(d_i(g+1)+d_j(g+1)\right) C_{ij}(g+1)\\
			&+\sum_{i\in \mathcal{W}_{g+1}}\sum_{j\in \mathcal{V}_g}\left(d_i(g+1)+d_j(g+1)\right)C_{ij}(g+1)\notag\\
			&+\frac{1}{2}\sum_{i,j\in \mathcal{W}_{g+1}}\left(d_i(g+1)+d_j(g+1)\right) C_{ij}(g+1).\notag
		\end{align}
	\end{small}
	We begin to compute the three sum terms for $H^{+}_{g+1}$ one by one.
	
	By Theorem~\ref{qTriHT},	the first sum term can be evaluated as
	\begin{align}\label{Kf+2}
		&\frac{1}{2}\sum_{i,j\in \mathcal{V}_g}\left(d_i(g+1)+d_j(g+1)\right)C_{ij}(g+1)\notag\\
		=&\sum_{\{i,j\}\subseteq \mathcal{V}_g} (q+1)\left(d_i(g)+d_j(g)\right)(q+1)C_{ij}(g)\notag\\
		=&(q+1)^2H^{+}_g.
	\end{align}
	For the second sum term, it can be computed as
	\begin{align}\label{Kf+3}
		&    \sum_{i\in \mathcal{W}_{g+1}}\sum_{j\in \mathcal{V}_g}\left(d_i(g+1)+d_j(g+1)\right)C_{ij}(g+1)\notag\\
		=&\sum_{i\in \mathcal{W}_{g+1}}\sum_{j\in \mathcal{V}_g}\left((q+1)+(q+1)d_j(g)\right)C_{ij}(g+1)\notag\\
		=&  (q+1)\sum_{i\in \mathcal{W}_{g+1}}\sum_{j\in \mathcal{V}_g}C_{ij}(g+1)\\
		&+(q+1)\sum_{i\in \mathcal{W}_{g+1}}\sum_{j\in\mathcal{V}_g}d_j(g)C_{ij}(g+1)\notag,
	\end{align}
	where the two sum terms can be further computed as follows.
	First, by Lemma~\ref{afore1},
	\begin{small}
		\begin{align}\label{Kf+4}
			&(q+1)\sum_{i\in \mathcal{W}_{g+1}}\sum_{j\in \mathcal{V}_g}C_{ij}(g+1)\notag\\
			=&(q+1)\bigg(\frac{q(q+1)}{4}H^{+}_g+\frac{q(q+1)}{2}M_g\notag\\
			&\quad\quad\quad~~\left(3M_gN_g-N_g^2+N_g\right)\bigg)\\
			=&\frac{q(q+1)^2}{4}H^{+}_g+\frac{q(q+1)^2}{2}M_g\left(3M_gN_g-N_g^2+N_g\right)\notag.
		\end{align}
	\end{small}
	On the other hand, by Lemma~\ref{Foster} and Theorem~\ref{qTriHT},
	\begin{align}\label{Kf+5}
		&     (q+1)\sum_{i\in \mathcal{W}_{g+1}}\sum_{j\in \mathcal{V}_g}d_j(g)C_{ij}(g+1)\notag\\
		=&     (q+1)^2\sum_{i\in \mathcal{W}_{g+1}}\sum_{j\in \mathcal{V}_g}d_j(g)
		\bigg(\frac{3}{2}M_g+\frac{1}{4}(2(C_{sj}(g)\notag\\
		&\quad\quad\quad\quad\quad\quad\quad\quad\quad\quad\quad~~+C_{tj}(g))-C_{st}(g))\bigg)\notag\\
		=&(q+1)^2\cdot qM_g\cdot2M_g\cdot\frac{3}{2}M_g\notag\\
		&+\frac{(q+1)^2}{2}\sum_{i\in \mathcal{W}_{g+1}}\sum_{j\in \mathcal{V}_g}d_j(g)(C_{js}(g)+C_{tj}(g))\notag\\
		&-\frac{(q+1)^2}{4}\sum_{j\in \mathcal{V}_g}d_j(g)C_{st}(g)\\
		=&3q(q+1)^2M_g^3\notag\\
		&+\frac{(q+1)^2}{2}\sum_{i\in \mathcal{W}_{g+1}}\sum_{j\in \mathcal{V}_g}d_j(g)(C_{js}(g)+C_{tj}(g))\notag\\
		&-\frac{(q+1)^2}{4}2M_g\cdot 2M_g(N_g-1)\notag\\
		=&3q(q+1)^2M_g^3\notag\\
		&+\frac{(q+1)^2}{2}\sum_{i\in \mathcal{W}_{g+1}}\sum_{j\in \mathcal{V}_g}d_j(g)(C_{js}(g)+C_{tj}(g))\notag\\
		&-q(q+1)^2M_g^2(N_g-1),\notag
	\end{align}
	while  the middle part can be computed to obtain
	\begin{align}\label{Kf+6}
		&   \frac{(q+1)^2}{2}\sum_{i\in \mathcal{W}_{g+1}}\sum_{j\in \mathcal{V}_g}d_j(g)\left(C_{js}(g)+C_{tj}(g)\right)\notag\\
		=&   \frac{(q+1)^2}{2}q\sum_{i\in \mathcal{V}^{(1)}}\sum_{j\in \mathcal{V}_g}d_j(g)\left(C_{js}(g)+C_{tj}(g)\right)\notag\\
		=&  \frac{q(q+1)^2}{2}\sum_{j\in\mathcal{V}_g}\sum_{i\in \mathcal{V}^{(1)}}d_j(g)\left(C_{js}(g)+C_{tj}(g)\right)\notag\\
		= &  \frac{q(q+1)^2}{2}\sum_{j\in \mathcal{V}_g}\sum_{k\in \mathcal{V}_g}d_j(g) d_k(g) C_{kj}(g)\notag\\
		=&  q(q+1)^2H^{\ast}_g.
	\end{align}
	Combining Eqs.~\eqref{Kf+3}-\eqref{Kf+6} yields
	\begin{align}\label{Kf+7}
		&   \sum_{i\in \mathcal{W}_{g+1}}\sum_{j\in \mathcal{V}_g}(d_i(g+1)+d_j(g+1)) C_{ij}(g+1)\notag\\
		=&\frac{q(q+1)^2}{4}H^{+}_g+q(q+1)^2H^{\ast}_g\\
		&+\frac{q(q+1)^2}{2}M_g\left(6M_g^2+M_gN_g-N_g^2+2M_g+N_g\right)\notag.
	\end{align}
	
	With regard to the third sum term in Eq.~\eqref{Kf+1}, by Lemma~\ref{afore2}, we have
	\begin{align}\label{Kf+8}
		&\sum_{i,j\in\mathcal{W}_{g+1}} (d_i(g+1)+d_j(g+1)) C_{ij}(g+1)\notag\\
		=&  2(q+1)\bigg(\frac{q^2(q\!+\!1)}{2}H^{\ast}_g+q(q+1)M_g^2\notag\\
		&\quad\quad\quad\quad~(3qM_g-qN_g-2)\bigg)\\
		=&  q^2(q+1)^2H^{\ast}_g+2q(q+1)^2M_g^2\left(3qM_g-qN_g-2\right)\notag.
	\end{align}
	Substituting Eqs.~\eqref{Kf+2}, \eqref{Kf+7} and~\eqref{Kf+8} back into Eq.~\eqref{Kf+1} gives
	\begin{equation*}
		\begin{aligned}
			H^{+}_{g+1}	=&\frac{(q+2)(q+1)^2}{2}H^{+}_g+\frac{q(q+2)(q+1)^2}{2}H^{\ast}_g\\
			&+\frac{1}{2}q(q+1)^2M_g(N_g+2M_g)\left(3M_g-N_g+1\right)\notag\\
			&+q(q+1)^2M_g^2\left(3qM_g-qN_g-2\right).
		\end{aligned}
	\end{equation*}
	Considering the initial condition $H^{+}_0=2(q+2)(q+1)^3$, the above recursive relation is solved to yield  the deriable result.
\end{proof}

\subsection{Mean Hitting Time}


We are now ready to present the result for mean hitting time
of $\mathcal{G}_q(g)$, denoted by $\langle{H}_g \rangle$, and its dominant behavior.
\begin{theorem}\label{mht}
	For any $g\geq0$, the mean hitting time  for graph $\mathcal{G}_q(g)$ is
	\begin{small}
		\begin{align}\label{m1}
			&\langle{H}_g \rangle=\notag\\
			&\frac{(q\!+\!3)^2}{(q\!+\!1)^2(q\!+\!2)^2\!\left(\left(\frac{(q\!+\!1)(q\!+\!2)}{2}\right)^g\!\!\!+\!\frac{2}{q\!+\!1}\right)\!\!\left(\left(\frac{(q\!+\!1)(q\!+\!2)}{2}\right)^g\!\!\!+\!\frac{1}{q\!+\!2}\right)}\notag\\
			&\bigg(\frac{(q+1)(q+2)^2(q^3+8q^2+15q+8)}{(q+3)^2(q^2+5q+8)}\left(\frac{(q+1)(q+2)}{2}\right)^{2g}\notag\\
			&+\frac{(q+4)(3q+7)(q+2)^2(q+1)^3}{2(q+3)^2(q^2+5q+8)}\left(\frac{(q+1)(q+2)}{2}\right)^{3g}\notag\\
			&-\frac{(q+2)(q+4)(q+1)^3}{2(q+3)^2}\left(\frac{(q+2)^2(q+1)^3}{4}\right)^{g}\notag\\
			&+\frac{(q+2)(q^2+9q+20)(q+1)^3}{(q+3)^2(q^2+5q+8)}\left(\frac{(q+2)(q+1)^2}{2}\right)^g\notag\\
			&+\frac{2(q+2)(q+1)^2(q+4)^2}{(q+3)^2(q^2+5q+8)}(q+1)^g\notag\\
			&-\frac{(q+2)(q+1)^2}{(q+3)^2}\left(\frac{(q+1)(q+2)}{2}\right)^g\bigg).
		\end{align}
	\end{small}
	When $g\to\infty$,
	\begin{small}
		\begin{equation}\label{m1x}
			\lim_{g\to\infty}\langle{H}_g \rangle =\frac{(q+3)(q+4)(3q+7)}{2(q+2)(q^2+5q+8)}N_g.
		\end{equation}
	\end{small}
\end{theorem}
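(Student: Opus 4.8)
The plan is to reduce everything to the \emph{sum of hitting times} $H_g$, since $\langle H_g\rangle=H_g/(N_g(N_g-1))$ by Definition~\ref{meanhittingtime}. Because $T_{ii}=0$ and $C_{ij}=T_{ij}+T_{ji}$, we have $H_g=\sum_{i\neq j}T_{ij}(g)=\frac12\sum_{i,j}C_{ij}(g)$, so $H_g$ is nothing but the sum of commute times over all unordered pairs; this makes the commute-time machinery of the previous section directly applicable. First I would set up a recursion by splitting $\mathcal{V}_{g+1}=\mathcal{V}_g\cup\mathcal{W}_{g+1}$ and writing
\begin{equation*}
H_{g+1}=\frac12\sum_{i,j\in\mathcal{V}_g}C_{ij}(g+1)+\sum_{i\in\mathcal{W}_{g+1}}\sum_{j\in\mathcal{V}_g}C_{ij}(g+1)+\frac12\sum_{i,j\in\mathcal{W}_{g+1}}C_{ij}(g+1).
\end{equation*}

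Next I would evaluate the three groups just as in the proof of Lemma~\ref{con6}, but without the degree weights. For both endpoints old, part~1 of Theorem~\ref{qTriHT} gives $C_{ij}(g+1)=(q+1)C_{ij}(g)$, so the first group equals $(q+1)H_g$. The mixed old--new group is precisely the quantity evaluated in Lemma~\ref{afore1}, and the both-new group is half the quantity evaluated in Lemma~\ref{afore2}. Assembling these produces a first-order linear recursion
\begin{equation*}
H_{g+1}=(q+1)H_g+\frac{q(q+1)}{4}H^{+}_g+\frac{q^2(q+1)}{4}H^{*}_g+R(g),
\end{equation*}
where $R(g)$ gathers the explicit polynomial-in-$M_g,N_g$ remainders coming from Lemmas~\ref{afore1} and~\ref{afore2}.

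Then I would substitute the closed forms for $H^{+}_g$ and $H^{*}_g$ from Lemmas~\ref{con6} and~\ref{conK*}, together with the known $M_g$ and $N_g$; each side then becomes a combination of pure exponentials in $g$, with bases $q+1$, $\frac{(q+1)(q+2)}{2}$ and its powers, $\frac{(q+2)(q+1)^2}{2}$, and $\frac{(q+2)^2(q+1)^3}{4}$. The recursion is solved mode by mode: the homogeneous part contributes $c\,(q+1)^g$, and for every distinct base $b\neq q+1$ on the right-hand side a particular solution of the form $(\text{const})\,b^{g}$ is matched. The free constant is fixed by the initial value $H_0=(q+2)(q+1)^2$, obtained from $\mathcal{G}_q(0)=\mathcal{K}_{q+2}$ where every $T_{ij}=q+1$ (consistent with $H^{+}_0=2(q+2)(q+1)^3$). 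Dividing by $N_g(N_g-1)$ and using the factorization $N_g(N_g-1)=\frac{(q+1)^2(q+2)^2}{(q+3)^2}\Big(\big(\frac{(q+1)(q+2)}{2}\big)^g+\frac{2}{q+1}\Big)\Big(\big(\frac{(q+1)(q+2)}{2}\big)^g+\frac{1}{q+2}\Big)$ then yields~\eqref{m1}.

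For the limit~\eqref{m1x}, the fastest-growing term of $H_g$ scales as $\big(\frac{(q+1)(q+2)}{2}\big)^{3g}$, while $N_g(N_g-1)\sim N_g^2$ scales as $\big(\frac{(q+1)(q+2)}{2}\big)^{2g}$; their ratio therefore grows linearly in $N_g$, and reading off the leading coefficient gives $\frac{(q+3)(q+4)(3q+7)}{2(q+2)(q^2+5q+8)}N_g$. The main obstacle is purely the bookkeeping: $R(g)$ together with the substituted $H^{+}_g$ and $H^{*}_g$ contribute roughly half a dozen distinct exponential modes, so solving the recursion and simplifying the quotient into the compact form~\eqref{m1} is error-prone algebra rather than a conceptual difficulty, with the single initial condition serving as the one genuine consistency check.
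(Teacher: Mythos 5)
Your plan is structurally identical to the paper's own proof: the paper also writes $H_{g+1}=\frac{1}{2}\sum_{i,j\in\mathcal{V}_g}C_{ij}(g+1)+\sum_{i\in\mathcal{W}_{g+1}}\sum_{j\in\mathcal{V}_g}C_{ij}(g+1)+\frac{1}{2}\sum_{i,j\in\mathcal{W}_{g+1}}C_{ij}(g+1)$, evaluates the three groups via Theorem~\ref{qTriHT}(1), Lemma~\ref{afore1} and Lemma~\ref{afore2}, solves the resulting first-order linear recursion with $H_0=(q+2)(q+1)^2$, divides by $N_g(N_g-1)$ using exactly the factorization you state, and extracts the limit from the dominant $\left(\frac{(q+1)(q+2)}{2}\right)^{3g}$ mode. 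So conceptually you have reconstructed the intended argument.

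There is, however, one concrete defect that would derail the computation: the coefficient of $H^{+}_g$ in your recursion. You take Lemma~\ref{afore1} at face value and write $\frac{q(q+1)}{4}H^{+}_g$, but that printed coefficient is wrong by a factor of $2$; the correct mixed-sum identity is $\sum_{i\in\mathcal{W}_{g+1}}\sum_{j\in\mathcal{V}_g}C_{ij}(g+1)=\frac{q(q+1)}{2}H^{+}_g+\frac{q(q+1)}{2}M_g\left(3M_gN_g-N_g^2+N_g\right)$, and the paper's proof of Theorem~\ref{mht} silently uses $\frac{q(q+1)}{2}H^{+}_g$ when assembling the recursion (an internal inconsistency of the paper: Eq.~\eqref{Kf9} repeats the lemma's $\frac{q(q+1)}{4}$, while the displayed recursion immediately after has $\frac{q(q+1)}{2}$, and only the latter is consistent with \eqref{m1}). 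To see why $\frac{q(q+1)}{2}$ is right: for $i\in\mathcal{W}_{g+1}$ with $\Gamma(i)=\{s,t\}$, Theorem~\ref{qTriHT}(2) gives $C_{ij}(g+1)=\frac{3(q+1)}{2}M_g+\frac{q+1}{2}\left(C_{sj}(g)+C_{tj}(g)\right)-\frac{q+1}{4}C_{st}(g)$; since each edge $st\in\mathcal{E}_g$ carries $q$ new nodes, summing the middle term over $i$ and $j$ yields $\frac{q(q+1)}{2}\sum_{x,j\in\mathcal{V}_g}d_x(g)C_{xj}(g)$, and by symmetry of $C$ one has $\sum_{x,j}d_xC_{xj}=\frac{1}{2}\sum_{x,j}(d_x+d_j)C_{xj}=\sum_{x,j}(d_x+d_j)T_{xj}=H^{+}_g$ exactly, not $\frac{1}{2}H^{+}_g$. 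A sanity check at $q=1$, $g=0$ (so $\mathcal{G}_1(0)=\mathcal{K}_3$, $H_0=12$, $H^{+}_0=48$, $H^{\ast}_0=48$): direct computation on $\mathcal{G}_1(1)$ gives the mixed sum $111$, matching $48+63$, whereas your version gives $24+63=87$; correspondingly your recursion produces $H_1=171$ instead of the true $H_1=195$, so its solution cannot equal \eqref{m1}. Note finally that the initial-condition fit you describe as ``the one genuine consistency check'' would not catch this: the free constant is adjusted to match $H_0$ no matter what the recursion is, so you would need an independent check of $H_1$ (or a comparison with \eqref{m1}) to detect the error. With the coefficient corrected to $\frac{q(q+1)}{2}$, the remainder of your plan goes through and coincides with the paper's proof.
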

\begin{proof}
	Since $\langle{H}_g \rangle=H_{g}/(N_{g}(N_{g}-1))$, in order to determine $\langle{H}_g \rangle$, we first determine $H_{g}$. 	For network $\mathcal{G}_q(g+1)$, we have
	\begin{align}\label{Kf1}
		H_{g+1}=&\frac{1}{2}\sum_{i,j\in\mathcal{V}_g\cup \mathcal{W}_{g+1}} C_{ij}(g+1)\notag\\
		=&\frac{1}{2}\sum_{i,j\in\mathcal{V}_g}C_{ij}(g\!+\!1)\!
		+\!\sum_{i\in \mathcal{W}_{g+1}}\!\sum_{j\in \mathcal{V}_g}\!C_{ij}(g\!+\!1)\notag\\
		&+\frac{1}{2}\sum_{i,j\mathcal{W}_{g+1}} C_{ij}(g+1).
	\end{align}
	Below we will compute the three sum terms in Eq.~\eqref{Kf1}. By Theorem~\ref{qTriHT},
	the first sum term can be evaluated as
	\begin{align}\label{Kf2}
		\frac{1}{2}\sum_{i,j\in \mathcal{V}_g}C_{ij}(g+1)=&\sum_{\{i,j\}\subseteq \mathcal{V}_g}(q+1)C_{ij}(g)\notag\\
		=&(q+1)H_g.
	\end{align}
	Using Lemma~\ref{afore1},	 the second sum term is determined as
	\begin{equation}\label{Kf9}
		\begin{aligned}
			\sum_{i\in \mathcal{W}_{g+1}}\sum_{j\in \mathcal{V}_g}C_{ij}(g+1)
			=&\frac{q(q+1)}{4}H^{+}_q(g)+\frac{q(q+1)}{2}M_g\\
			&\left(3M_gN_g-N_g^2+N_g\right).
		\end{aligned}
	\end{equation}
	Finally, by Lemma~\ref{afore2},  the third sum term is computed as
	\begin{align}\label{Kf10}
		&\sum_{i,j\in\mathcal{W}_{g+1}} C_{ij}(g+1)\notag\\
		=&\frac{q^2(q\!+\!1)}{2}H^{\ast}_g+q(q+1)M_g^2(3qM_g-qN_g-2).
	\end{align}
	Plugging Eqs.~\eqref{Kf2}-\eqref{Kf10} into Eq.~\eqref{Kf1} leads to
	\begin{small}
		\begin{equation*}
			\begin{aligned}
				H_{g+1}	=&(q+1)H_g\!+\!\frac{q(q+1)}{2}H^{+}_q(g)\!+\!\frac{q^2(q+1)}{4}H^{\ast}_q(g)\\
				&+\frac{1}{2}q(q+1)M_gN_g(3M_g-N_g+1)\\
				&+\frac{1}{2}q(q+1)M_g^2(3qM_g-qN_g-2).
			\end{aligned}
		\end{equation*}
	\end{small}
	Considering the initial condition $H_0=(q+2)(q+1)^2$, the recursive relation is solved to obtain
	\begin{small}
		\begin{equation*}
			\begin{aligned}
				H_g=&\frac{(q+1)(q+2)^2(q^3+8q^2+15q+8)}{(q+3)^2(q^2+5q+8)}\left(\frac{(q+1)(q+2)}{2}\right)^{2g}\\
				&+\frac{(q+4)(3q+7)(q+2)^2(q+1)^3}{2(q+3)^2(q^2+5q+8)}\left(\frac{(q+1)(q+2)}{2}\right)^{3g}\\
				&-\frac{(q+2)(q+4)(q+1)^3}{2(q+3)^2}\left(\frac{(q+2)^2(q+1)^3}{4}\right)^{g}\\
				&+\frac{(q+2)(q^2+9q+20)(q+1)^3}{(q+3)^2(q^2+5q+8)}\left(\frac{(q+2)(q+1)^2}{2}\right)^g\\
				&+\frac{2(q+2)(q+1)^2(q+4)^2}{(q+3)^2(q^2+5q+8)}(q+1)^g\\
				&-\frac{(q+2)(q+1)^2}{(q+3)^2}\left(\frac{(q+1)(q+2)}{2}\right)^g.
			\end{aligned}
		\end{equation*}
	\end{small}
	Plugging this result to $\langle{H}_g \rangle=H_{g}/(N_{g}(N_{g}-1))$ gives~\eqref{m1}.
	
	In a similar way to that of Kemeny's constant $K_{g}$, we can represent mean hitting time $\langle{H}_g \rangle$  in terms of
	the number of nodes $N_g$, and obtain the leading term of $\langle{H}_g \rangle$ given by~\eqref{m1x}.
\end{proof}
Theorem~\ref{mht} indicates that  mean hitting time $\langle{H}_g \rangle$  of network  $\mathcal{G}_q(g)$ scales  linearly as $N_g$ with the factor decreasing with $q$, which is similar to that for the Kemeny's constant $K_g$.


%

\section{Conclusion}

The edge corona product of a graph is a natural extension of traditional triangulation operation, which has been successfully applied to generate complex networks with prominent properties observed in various real-life systems. In this paper, we presented an extensive study of various properties for hitting times of random walks on a class of graphs, which are iteratively generated by edge corona product of complete graphs. We first deduced recursive formulas for the eigenvalues and eigenvectors of normalized adjacency matrix of the graphs under consideration. Using these results, we then determined a recursive expression for two-node hitting time from an arbitrary node to another. Also, we obtained exact solution to the Kemeny's constant, which is a weighted average of hitting times among all node pairs. Finally, we provided analytical formulas for the sum of hitting times, the sum of multiplicative-degree hitting times, and the sum of additive-degree hitting times.


\section*{Acknowledgements}

This work was supported by the National
Natural Science Foundation of China (Nos. 61872093, U20B2051, 62272107 and U19A2066),  the Shanghai Municipal Science and Technology Major Project (No.2021SHZDZX0103), the Innovation Action Plan of Shanghai Science and Technology (No. 21511102200), the Key R \& D Program of Guangdong Province (No. 2020B0101090001), and Ji Hua Laboratory, Foshan, China (No.X190011TB190). Mingzhe Zhu was also supported by Fudan's Undergraduate Research
Opportunities Program (FDUROP) under Grant No. 20001.

\section*{Data Availability Statement}

No new data were generated or analysed in support of this research.

\bibliographystyle{compj}
\bibliography{Triangulation}

\end{document}